\newtheorem{theorem}{Theorem}
\newtheorem{lemma}{Lemma}
\newtheorem{proposition}{Proposition}
\newtheorem{remark}{Remark}
\newtheorem{property}{Property}
\newcommand{\Mod}[1]{\ \mathrm{mod}\ #1}
\DeclareRobustCommand{\BM}[1]{%
  \ifcat\noexpand#1\relax\bm{\boldUppercaseItalicGreek{#1}}\else\bm{#1}\fi
}
\newcommand{\dw}{\mathcal{N}_{\mathrm{dw}}}
\newcommand{\V}[1]{\bm{#1}}
\newcommand{\M}[1]{\BM{#1}}
\newcommand{\modA}[3]{\left(#1 + #2\right) \Mod{#3} }
\newcommand{\modS}[3]{\left(#1 - #2\right) \Mod{#3} }
\newcommand{\qcS}[1]{\mathcal{N}_{\mathrm{{#1}}}}
\newcommand{\qc}[2]{\qcS{#1}\left( #2\right)}
\newcommand{\ketbra}[2]{\ket{#1}\bra{#2}}
\definecolor{CCTLABgreen}{RGB}{0,128,0}
\title{Holevo Capacity of Discrete Weyl Channels}
\author[1]{Junaid~ur~Rehman}
\author[1,*]{Youngmin~Jeong}
\author[2]{Jeong~San~Kim}
\author[1,*]{Hyundong~Shin}
\affil[1]{Department of Electronic Engineering, Kyung Hee University, 1732 Deogyeong-daero, Giheung-gu, Yongin-si, Gyeonggi-do, 17104 Korea.}
\affil[2]{Department of Applied Mathematics and Institute of Natural Sciences, Kyung Hee University, 1732 Deogyeong-daero, Giheung-gu, Yongin-si, Gyeonggi-do, 17104 Korea.}
\affil[*]{Correspondence and requests for materials should be addressed to J.R and H.S (email: junaid@khu.ac.kr; hshin@khu.ac.kr)}
\begin{abstract}

Holevo capacity is the maximum rate at which a quantum channel can reliably transmit classical information without entanglement. However, calculating the Holevo capacity of arbitrary quantum channels is a nontrivial and computationally expensive task since it requires the numerical optimization over all possible input quantum states. In this paper, we consider discrete Weyl channels (DWCs) and exploit their symmetry properties to model DWC as a classical symmetric channel. We characterize lower and upper bounds on the Holevo capacity of DWCs using simple computational formulae. Then, we provide a sufficient and necessary condition where the upper and lower bounds coincide. The framework in this paper enables us to characterize the exact Holevo capacity for most of the known special cases of DWCs.

\end{abstract}
\begin{document}

\flushbottom

\maketitle

\thispagestyle{empty}

\section*{Introduction}

One of the fundamental tasks in the context of information theory is to compute the maximum rate at which information can be reliably transmitted \cite{WIL:11:CUP,CJ:12:JWS}. Classical channels have the capability of transmitting classical information only. On the contrary, quantum channels are more rich in terms of communication tasks\cite{RQJ:17:MPLB,QRJ:17:PTEP}. Trivially, quantum channels are capable of transmitting quantum information. However, due to the versatile nature and unique features of quantum mechanics, it is possible to associate multiple communication tasks with a quantum channel \cite{ZJS:18:SR}. Thus, we have classical capacity, quantum capacity, private classical capacity, and entanglement-assisted classical capacity of a quantum channel.  All of theses correspond to different information communication tasks \cite{Hol:98:T_IT,DEV:05:IEIT,BSS:02:T_IT,BSS:99:PRL}. 

The calculation of various capacities involves an optimization task that is not easy to perform. For example, the capacity of a classical channel is given by a single letter formula---the mutual information between input and output of the channel---maximized over the probability distribution of the input random variable \cite{Shan:48:BST}. Efficient methods exist that can perform this maximization \cite{Bla:72:T_IT,Ari:72:T_IT}. On the contrary, capacities (except the entanglement-assisted classical capacity) of a quantum channel are given in terms of regularization of asymptotically many channel uses. These regularized formulae are  mathematically intractable in general and put forth an unsolvable optimization problem \cite{CEM:15:NC}. Simplification of these formulae is not possible due to the nonadditive and nonconvex natures of capacities of quantum channels \cite{SY:08:Sci,ES:16:PRA,Has:09:NP}.
 The need of regularization, however, can be removed either 1) if the capacity of the channel is additive, or 2) if we restrict the optimization to be on the individual channel use.
For example, unital qubit channels \cite{Kin:02:JMP} and entanglement breaking channels \cite{Sho:02:JMP} are known to be additive and thus their classical capacity can be computed without the need of regularization. Similarly, for the task of classical communication over a quantum channel, one can prohibit the use of inputs states correlated over multiple uses of the channel---effectively allowing optimization on the individual channel use only---to obtain a lower bound on the classical capacity of a quantum channel. This notion of capacity is known as the Holevo capacity. Even with such a  simplification of the problem, the calculation remains considerably demanding. As a matter of fact, calculation of the Holevo capacity falls in the category of NP-complete problems \cite{BS:08:arXiv,ES:16:PRA}.

This multilayer difficulty has stimulated a good amount of research in the field of quantum information theory. Different researchers have taken different routes to accomplish this seemingly impossible task. 
 For example, different definitions of capacities have been proposed \cite{WY:16:IEIT}, analytical expressions for the special channels have been found \cite{Kin:03:IEEE_T_IT},  and some bounds that are additive and easier to calculate have been computed \cite{FG:17:IEEE_T_IT} to solve the problem of regularization. While for solving the difficulty of calculation, exploiting special properties of a given channel \cite{Cor:04:PRA}, and methods that can approximate the capacity upto a fixed a posteriori error have been proposed \cite{SSE:16:T_IT}.

In this work we give easy to compute lower and upper bounds on the Holevo capacity of discrete Weyl channels (DWCs). Our employed approach involves modeling the DWC as a classical symmetric channel and use the existing results from the classical information theory to lower bound the Holevo capacity of a DWC. The upper bound is based on the majorization relation of any possible output state of a DWC with the most ordered state based on the channel parameters. We give a necessary and sufficient condition for which the two bounds coincide. We find that this condition is met for the known special cases (Pauli qubit channel, and the qudit depolarizing channel) of DWC and hence we can recover the exact capacity expression for these cases. Through numerical examples we show that the coincidence of two bounds is sufficient but not necessary for the lower bound to give exact capacity.

\section*{Discrete Weyl Channel}

A quantum state $\pmb{\rho}$ on the Hilbert space is a positive operator with unit trace (i.e., a density operator). We consider the Hilbert space of finite dimension $d$. The state is said to be \emph{pure} if it has the form $\pmb{\rho} = \ketbra{\psi}{\psi}$. We usually denote a pure state simply by a ket e.g., $\ket{\psi}$, which is a column vector in the Hilbert space. A quantum channel
$
\mathcal{N}: \pmb{\rho} \rightarrow \qc{}{\pmb{\rho}}
$
is a completely positive trace preserving (CPTP) map transforming the input state $\pmb{\rho}$ to an output state $\qc{}{\pmb{\rho}}$. The map can be specified in terms of Kraus operators $\left\{\M{A}_i \right\}$ as
$
\qc{}{\pmb{\rho}} = \sum_{i} \M{A}_{i} \pmb{\rho} \M{A}_{i}^{\dag}
$
where $\sum_{i}\M{A}_{i}^{\dag}\M{A}_{i} = \M{I}_d$ and $\M{I}_d$ is the identity operator on the $d$-dimensional Hilbert space. For a random unitary channel, it is possible to represent Kraus operators as $\M{A}_{i} = \sqrt{p_i} \M{B}_{i}$, such that the channel applies an operator $\M{B}_{i}$ on the input state with the probability $p_i$ \cite{NC:11:CUP}.

\begin{figure}[t!]
\centering
\includegraphics[width = 0.49\textwidth]{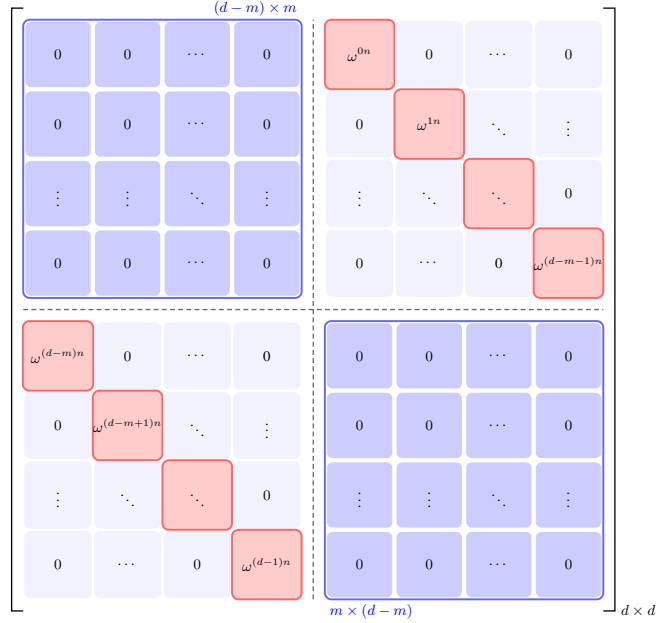}
\caption{The general structure of a Weyl operator $\M{W}_{nm}$ in an arbitrary dimension $d$.}
\label{fig:Weyl_str}
\end{figure}

Let $\pmb{\sigma}_0 = \M{I}_2$ be the $2\times 2$ identity matrix, and
\begin{align}
\pmb{\sigma} _{1} = \begin{bmatrix}
0&1\\1&0
\end{bmatrix}, \hspace{0.2cm}
\pmb{\sigma} _{2} = \begin{bmatrix}
0&-\imath\\\imath&0
\end{bmatrix}, \hspace{0.2cm}
\pmb{\sigma} _{3} = \begin{bmatrix}1&0\\0&-1\end{bmatrix}
\end{align}
be the Pauli matrices. The Pauli qubit channel, denoted by $\qc{\mathrm{p}}{\pmb{\rho}}$, is then defined as 
\begin{align}
\qc{\mathrm{p}}{\pmb{\rho}} = \sum_{i = 0}^3 p_{i}\pmb{\sigma}_{i} \pmb{\rho} \pmb{\sigma}_{i}^{\dag}
\end{align}
which is a random unitary channel.

Discrete Weyl operators are a non-Hermitian generalization of Pauli operators for dimension $d$ \cite{BK:08:JPAMT}. A Weyl operator $\M{W}_{nm}$ 
 on the $d$-dimensional Hilbert space is defined as \cite{Wey:27:QG}
\begin{align}
\M{W}_{nm} &= \sum_{k = 0}^{d - 1} \omega^{kn}\ketbra{k}{\modA{k}{m}{d}}
\label{def:WeylOper}
\end{align}
for $n,m = 0,1,\cdots , d-1$; $\omega = \exp \left(2\pi \imath/d \right)$; and $\ket{k}$ is the $k$th basis vector in the computational basis (for notational convenience, the indexing of entries of vectors and matrices start from 0). A general structure of a $d$-dimensional Weyl operator $\M{W}_{nm}$ is shown in Fig.~\ref{fig:Weyl_str}. 

\begin{property}
A Weyl operator $\M{W}_{nm}$, when applied on a $d$-dimensional vector $\ket{\alpha}$, up-shifts the entries of $\ket{\alpha}$ by $m$ locations and rotates $i$th entry (according to new indexing) by a phase of $\omega^{in}$. We refer to this property as shift and phase operation of Weyl operators.
\label{property:shift_phase}
\end{property}

Eigenvalues of a Weyl operator $\M{W}_{nm}$ are given by (see Methods)
\begin{align}
\lambda_s = \omega^{mn\frac{\left(d-1 \right)}{2} + s}
\label{eq:eig}
\end{align}
where $s \in  \left\{ \modS{mk}{nj}{d}\right\}$ for $j,k = 0,\cdots , d-1$. A schematic illustration for the Weyl operator $\M{W}_{31}$ on a 4-dimensional Hilbert space is given in Fig.~\ref{fig:eig_val}. Note that Weyl operators operating on a prime dimensional Hilbert space have $d$ distinct eigenvalues {(and we can simply state that $s=0,1,\cdots , d-1$)} except for $\M{W}_{00}$. On the other hand, some Weyl operators of a composite dimension may have repeated eigenvalues. This repetition of eigenvalues restrains us from deriving general forms of our results directly. We circumvent this problem by first presenting our results for the Hilbert space of a prime dimension, and then show that an alternate formulation of our results can be applied to the case of a composite dimensional Hilbert space as well.

\begin{figure*}[t!]
\centering
\includegraphics[width = 0.83\textwidth]{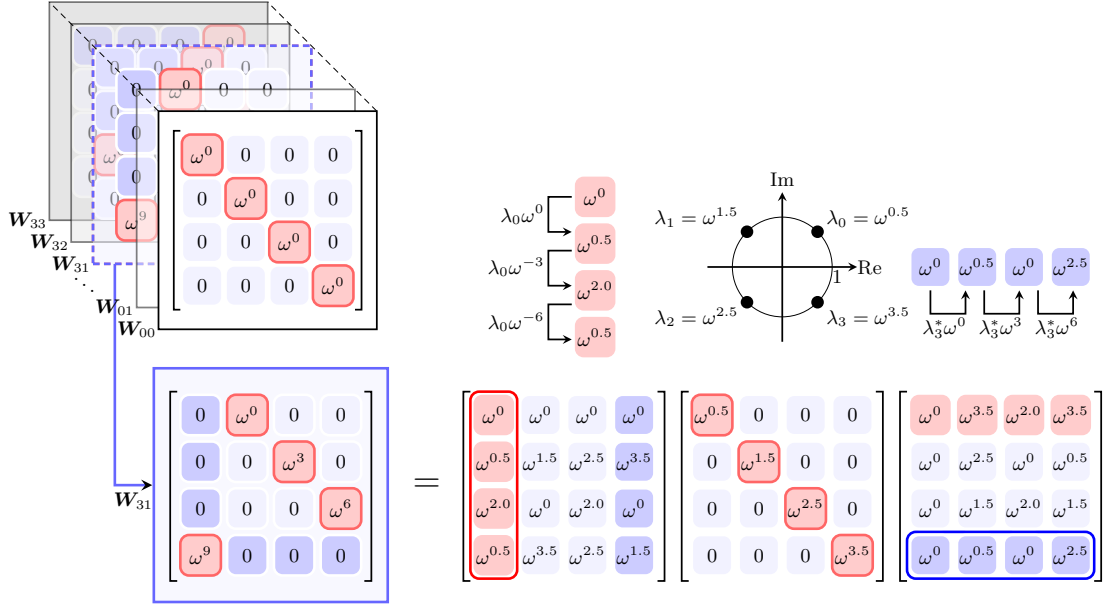}
\caption{A schematic illustration for the structure of discrete Weyl operator $\M{W}_{31}$ on a 4-dimensional Hilbert space. Each eigenvalue $\lambda_s$ and eigenvector $\ket{\lambda_s}$ can be found using \eqref{eq:eig} and \eqref{eq:AlphaRelation}, respectively. } 
\label{fig:eig_val}
\end{figure*}

A DWC, denoted by $\qc{\mathrm{dw}}{\pmb{\rho}}$, is a generalization of the Pauli qubit channel \cite{WIL:11:CUP}, defined in terms of discrete Weyl operators as
\begin{align}
\qc{\mathrm{dw}}{\pmb{\rho}} = \sum_{n = 0}^{d - 1}\sum_{m = 0}^{d - 1} p_{nm}\M{W}_{nm} \pmb{\rho}  \M{W}_{nm}^{\dag}
\label{eq:PauliQudit}
\end{align}
where $\M{W}_{nm}$ acts on the input state $\pmb{\rho}$ with probability $p_{nm}$.

The Holevo capacity of a quantum channel is defined as\cite{Hol:98:T_IT,SW:97:PRA}
\begin{align}
\chi\left(\mathcal{N}\right) = 
\sup_{\left\{ p_i, \pmb{\rho}_i\right\}} 
	\left[   
		S\left( \sum_{i} p_i \mathcal{N}\left(\pmb{\rho}_i\right) \right) - 
		\sum_i p_i S\left(\mathcal{N}\left(\pmb{\rho}_i \right) \right)
	\right]
	\label{eq:HolevoChi}
\end{align}
where $p_i$ is the \emph{a priori} probability of input state $\pmb{\rho}_i$; $S\left(\pmb{\rho}\right) = -\text{Tr} \left(\pmb{\rho} \log \pmb{\rho} \right)$ is the von Neumann entropy, and $\mathcal{N}\left( \pmb{\rho} \right)$ is the output state produced by the action of channel $\mathcal{N}$ on the input state $\pmb{\rho}$. The Holevo capacity corresponds to the maximum rate of classical information when  input states are restricted to be separable, i.e., the inputs of the channel are not entangled over multiple uses.

\begin{lemma}
If an input state of DWC operating on a $d$-dimensional Hilbert space is an eigenstate of a $d$-dimensional Weyl operator $\M{W}_{nm}$, then the output state is diagonal in the eigenbasis of $\M{W}_{nm}$.
\label{lemma:classical}
\end{lemma}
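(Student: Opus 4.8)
The plan is to show that every Kraus branch $\M{W}_{n'm'}$ of the DWC maps an eigenvector of $\M{W}_{nm}$ to another eigenvector of $\M{W}_{nm}$; then the output state, being a convex mixture of the resulting rank-one projectors, has all of its support on eigenvectors of $\M{W}_{nm}$ and is therefore diagonal in the eigenbasis of $\M{W}_{nm}$.

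First I would record the composition law of Weyl operators. A short computation straight from \eqref{def:WeylOper} gives $\M{W}_{nm}\M{W}_{n'm'} = \omega^{mn'}\M{W}_{(n+n')(m+m')}$, and, swapping the roles of the two index pairs, $\M{W}_{n'm'}\M{W}_{nm} = \omega^{m'n}\M{W}_{(n+n')(m+m')}$, where the index arithmetic is modulo $d$. Combining these yields the commutation relation
\begin{align}
\M{W}_{nm}\,\M{W}_{n'm'} = \omega^{\,mn' - m'n}\,\M{W}_{n'm'}\,\M{W}_{nm}.
\label{eq:Wcomm}
\end{align}
Next, let $\ket{\lambda_s}$ be an eigenstate of $\M{W}_{nm}$ with $\M{W}_{nm}\ket{\lambda_s} = \lambda_s\ket{\lambda_s}$. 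For any branch $(n',m')$ I would apply \eqref{eq:Wcomm} to get
\begin{align}
\M{W}_{nm}\bigl(\M{W}_{n'm'}\ket{\lambda_s}\bigr) = \omega^{\,mn'-m'n}\,\M{W}_{n'm'}\,\M{W}_{nm}\ket{\lambda_s} = \bigl(\omega^{\,mn'-m'n}\lambda_s\bigr)\bigl(\M{W}_{n'm'}\ket{\lambda_s}\bigr),
\end{align}
so $\M{W}_{n'm'}\ket{\lambda_s}$ is again an eigenvector of $\M{W}_{nm}$ (its eigenvalue $\omega^{mn'-m'n}\lambda_s$ automatically lies in the spectrum of $\M{W}_{nm}$, consistently with \eqref{eq:eig}). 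Since $\M{W}_{n'm'}$ is unitary this eigenvector is normalized; writing $\M{W}_{n'm'}\ketbra{\lambda_s}{\lambda_s}\M{W}_{n'm'}^{\dag} = \ketbra{\mu_{n'm'}}{\mu_{n'm'}}$ and substituting into \eqref{eq:PauliQudit} gives
\begin{align}
\qc{\mathrm{dw}}{\ketbra{\lambda_s}{\lambda_s}} = \sum_{n'=0}^{d-1}\sum_{m'=0}^{d-1} p_{n'm'}\,\ketbra{\mu_{n'm'}}{\mu_{n'm'}},
\end{align}
a convex combination of projectors onto eigenvectors of $\M{W}_{nm}$, hence diagonal in the eigenbasis of $\M{W}_{nm}$. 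An equivalent basis-free phrasing I would also note is that, since $\lvert\lambda_s\rvert = 1$ and $\lvert\omega^{mn'-m'n}\rvert = 1$, one has $\M{W}_{nm}\,\qc{\mathrm{dw}}{\ketbra{\lambda_s}{\lambda_s}}\,\M{W}_{nm}^{\dag} = \qc{\mathrm{dw}}{\ketbra{\lambda_s}{\lambda_s}}$, i.e. the output commutes with $\M{W}_{nm}$.

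The only step needing real care — and the main obstacle — is the last one when $\M{W}_{nm}$ has degenerate eigenvalues, which can occur in a composite dimension. Then the eigenbasis of $\M{W}_{nm}$ is not unique, so ``diagonal in the eigenbasis'' must be read as: there exists an orthonormal eigenbasis of $\M{W}_{nm}$ in which the output is diagonal. This still holds, because commutation with $\M{W}_{nm}$ forces the output to be block-diagonal across the eigenspaces of $\M{W}_{nm}$, and on each eigenspace the Hermitian output block can be further diagonalized by a unitary whose columns are again eigenvectors of $\M{W}_{nm}$. For prime $d$ this subtlety disappears: as noted after \eqref{eq:eig}, every $\M{W}_{nm}\neq\M{W}_{00}$ has $d$ distinct eigenvalues, so its eigenbasis is unique up to phases and the statement is unambiguous.
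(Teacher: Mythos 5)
Your proof is correct, and it reaches the key intermediate fact by a genuinely different route than the paper. Both arguments hinge on showing that each Kraus branch $\M{W}_{n'm'}$ sends an eigenvector of $\M{W}_{nm}$ to another eigenvector of $\M{W}_{nm}$, with the eigenvalue multiplied by $\omega^{mn'-m'n}$; you obtain this from the Weyl commutation relation $\M{W}_{nm}\M{W}_{n'm'} = \omega^{mn'-m'n}\M{W}_{n'm'}\M{W}_{nm}$, whereas the paper works entry-by-entry: it encodes the eigenvector condition as the recurrence $\alpha_{\modA{m}{k}{d}} = \lambda\omega^{-nk}\alpha_k$ among the components (their \eqref{eq:AlphaRelation}), applies the shift-and-phase action of $\M{W}_{ij}$ to get the components $\beta_k = \omega^{ki}\alpha_{\modA{j}{k}{d}}$, and verifies that the $\beta_k$ satisfy the same recurrence with $\lambda$ replaced by $\lambda\omega^{mi-nj}$. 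Your group-theoretic version is shorter, basis-free, and makes the phase $\omega^{mn'-m'n}$ appear for structural reasons rather than by calculation; the paper's version is more elementary and dovetails with the explicit eigenvector construction it reuses in Proposition~\ref{prop:model}. You are also more careful than the paper on one point: in composite dimension, where $\M{W}_{nm}$ can have degenerate eigenvalues, distinct branches may land on non-orthogonal vectors inside the same eigenspace, so "diagonal in the eigenbasis" needs the existential reading you give (block-diagonality from commutation with $\M{W}_{nm}$, then diagonalization within each eigenspace by eigenvectors of $\M{W}_{nm}$); the paper's closing sentence, which asserts the output is a mixture of \emph{orthonormal} eigenstates, silently assumes non-degeneracy and is only airtight as written for prime $d$ and $(n,m)\neq(0,0)$, which is in fact the only case it later uses.
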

\begin{proof}
See Methods section.
\end{proof}

{As a consequence of the above Lemma, we can choose the set of input states to be $d$ orthogonal eigenvectors of some Weyl operator $\M{W}_{nm}$, and measure the output in the eigenbasis of $\M{W}_{nm}$. The uncertainty at the output of the channel in this case is purely classical in nature. In this sense, a DWC is behaving as a classical channel, transitioning a distinguishable state into an unknown but perfectly distinguishable state. We completely characterize the simulated classical channel in terms of channel transition matrix in the following Proposition.}

\begin{proposition}
{A DWC of a prime dimension $d$ with orthonormal eigenstates of $\M{W}_{nm}$ as the input states behaves as a classical symmetric channel with the following transition matrix}
\begin{align}
{
\M{T}_{nm} = \begin{bmatrix}
P_1 & P_2 &  \cdots & P_{d}\\
P_{d} & P_1 &  \cdots & P_{d-1}\\
\vdots &  \vdots & \ddots & \vdots\\
P_2 & P_3 &  \cdots & P_1
\end{bmatrix},
 \qquad \left(n,m\right) \neq \left(0,0\right)}
\label{eq:tr_matrix}
\end{align}
{where}
\begin{align}
{
P_k = \sum_{ij:\omega^{mi - nj} = \omega^{k-1}}p_{ij}.
}
\label{eq:pk_sum}
\end{align}

\label{prop:model}
\end{proposition}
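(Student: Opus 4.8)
The plan is to restrict the DWC to the eigenbasis of $\M{W}_{nm}$ and reduce its action there to a bookkeeping of permutations governed by the multiplication rule of Weyl operators. By Lemma~\ref{lemma:classical}, feeding the eigenstate $\ketbra{\lambda_s}{\lambda_s}$ of $\M{W}_{nm}$ into the channel gives $\qc{\mathrm{dw}}{\ketbra{\lambda_s}{\lambda_s}}=\sum_{ij}p_{ij}\,\M{W}_{ij}\ketbra{\lambda_s}{\lambda_s}\M{W}_{ij}^{\dag}$, which is diagonal in the eigenbasis $\{\ket{\lambda_0},\dots,\ket{\lambda_{d-1}}\}$ of $\M{W}_{nm}$; measuring the output in that basis therefore realizes a genuine classical channel, and the remaining task is only to read off its transition probabilities. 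So the first step is to identify, for each Kraus index $(i,j)$, which eigenvector $\M{W}_{ij}\ket{\lambda_s}$ is.

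For this I would use the composition identity $\M{W}_{ij}\M{W}_{nm}=\omega^{jn}\,\M{W}_{\modA{i}{n}{d},\,\modA{j}{m}{d}}$, which follows by a direct computation from the definition \eqref{def:WeylOper}; it yields the commutation relation $\M{W}_{ij}\M{W}_{nm}=\omega^{jn-mi}\,\M{W}_{nm}\M{W}_{ij}$. Acting with $\M{W}_{nm}$ on $\M{W}_{ij}\ket{\lambda_s}$ and commuting $\M{W}_{nm}$ past $\M{W}_{ij}$ shows that $\M{W}_{ij}\ket{\lambda_s}$ is again an eigenvector of $\M{W}_{nm}$, now with eigenvalue $\omega^{mi-nj}\lambda_s$. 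Since $d$ is prime and $(n,m)\neq(0,0)$, the eigenvalues in \eqref{eq:eig} are all distinct (so every eigenspace is one-dimensional) and are indexed by $s\in\{0,\dots,d-1\}$; hence $\omega^{mi-nj}\lambda_s=\lambda_{s'}$ for the unique index $s'=\modA{s}{mi-nj}{d}$, and $\M{W}_{ij}\ket{\lambda_s}$ is a unit-modulus phase times $\ket{\lambda_{s'}}$, a phase that cancels in $\M{W}_{ij}\ketbra{\lambda_s}{\lambda_s}\M{W}_{ij}^{\dag}=\ketbra{\lambda_{s'}}{\lambda_{s'}}$.

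Substituting this into the output state, the probability of observing $\ket{\lambda_{s'}}$ given the input $\ket{\lambda_s}$ is $\sum_{ij:\,\modA{s}{mi-nj}{d}=s'}p_{ij}$, which depends on the pair $(s,s')$ only through $\modS{s'}{s}{d}$. Writing $k-1\equiv s'-s\pmod d$---equivalently, collecting the indices with $\omega^{mi-nj}=\omega^{k-1}$---recovers exactly $P_k$ of \eqref{eq:pk_sum}, and placing $P_{\modS{s'}{s}{d}+1}$ in row $s$ and column $s'$ produces the circulant matrix \eqref{eq:tr_matrix}. Finally $\sum_{k=1}^{d}P_k=\sum_{ij}p_{ij}=1$, so $\M{T}_{nm}$ is a legitimate channel matrix; being circulant, its rows are cyclic permutations of one another and so are its columns, which is precisely the defining property of a symmetric classical channel. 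I expect the only real friction to be the modular index arithmetic together with the reliance on eigenvalue distinctness: the latter is exactly what forces the prime-dimension hypothesis here, with the composite case to be covered separately by the alternate formulation announced earlier in the paper.
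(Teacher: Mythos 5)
Your proposal is correct and arrives at the same place as the paper, but it establishes the one nontrivial technical fact by a different mechanism. The paper's proof of Proposition~\ref{prop:model} simply invokes Lemma~\ref{lemma:classical}, whose own proof works entry-by-entry: it shows that the components of $\M{W}_{ij}\ket{\lambda}$ satisfy the same recursion \eqref{eq:AlphaRelation} as an eigenvector of $\M{W}_{nm}$, with $\lambda$ replaced by $\lambda\omega^{mi-nj}$. You instead derive the eigenvalue shift from the operator identity $\M{W}_{ij}\M{W}_{nm}=\omega^{jn-mi}\M{W}_{nm}\M{W}_{ij}$ (which does follow directly from \eqref{def:WeylOper}) by commuting $\M{W}_{nm}$ past $\M{W}_{ij}$. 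Both routes give $\M{W}_{nm}\bigl(\M{W}_{ij}\ket{\lambda_s}\bigr)=\omega^{mi-nj}\lambda_s\,\M{W}_{ij}\ket{\lambda_s}$; yours is basis-free, self-contained, and arguably cleaner, while the paper's reuses the shift-and-phase machinery it has already built for Lemma~\ref{lemma:classical}. From that point on the two arguments coincide: nondegeneracy of the spectrum for prime $d$ and $(n,m)\neq(0,0)$ makes the label map $s\mapsto\modA{s}{mi-nj}{d}$ well defined on one-dimensional eigenspaces, the residual phase cancels when conjugating the projector, the probabilities collect into the $P_k$ of \eqref{eq:pk_sum}, and the circulant structure of \eqref{eq:tr_matrix} gives a classical symmetric channel. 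You also correctly identify that the distinct-eigenvalue requirement is exactly where primality enters, which mirrors the paper's own remark in its proof.
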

\begin{proof}
See Methods section.
\end{proof}

\begin{figure*}[t!]
\centering
\includegraphics[width = 0.6\textwidth]{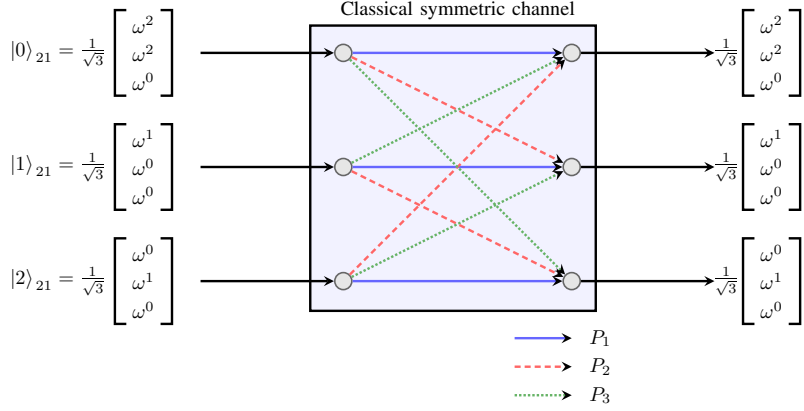}
\caption{An example DWC for $d=3$ driven by the eigenstates of $\M{W}_{21}$.  
}
\label{fig:dwc}
\end{figure*}

As an example, a DWC driven by the eigenstates of $\M{W}_{21}$ with $d = 3$ is shown in Fig.~\ref{fig:dwc}. In this example, we have 
$P_1= p_{00} + p_{21} + p_{12}$, 
$P_2= p_{20} + p_{11} + p_{02}$, and 
$P_3= p_{10} + p_{01} + p_{22}$.

\section*{Results}

Based on the proposition~\ref{th:pauliQudit}, we give the following simple and natural lower bound on the Holevo capacity of a DWC:
\begin{theorem}
The Holevo capacity $\chi\left(\qcS{dw}\right)$ of the channel in \eqref{eq:PauliQudit} with a prime $d$ is bounded as
\begin{align}
{
\chi\left(\qcS{dw}\right) \geq \log_2 \left(d\right) - \min_{n,m}H\left(\text{row of }\M{T}_{nm}\right), \qquad (n,m) \neq (0,0)
}
\label{eq:CapQudit}
\end{align}
where $\M{T}_{nm}$ is the channel transition matrix of the $(n,m)$th symmetric channel obtained by fixing the eigenstates of $\M{W}_{nm}$ as the signal states and $H\left( \cdot \right)$ is the Shannon entropy.
\label{th:pauliQudit}
\end{theorem}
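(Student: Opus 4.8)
The plan is to exploit that the Holevo capacity in \eqref{eq:HolevoChi} is a supremum over input ensembles, so exhibiting one concrete ensemble that achieves the right-hand side of \eqref{eq:CapQudit} suffices. Fix a pair $(n,m)\neq(0,0)$; since $d$ is prime, $\M{W}_{nm}$ possesses a full orthonormal eigenbasis $\{\ket{\lambda_0},\ldots,\ket{\lambda_{d-1}}\}$, so Proposition~\ref{prop:model} applies. I would take these $d$ eigenstates as the signal states with a priori probabilities $\{q_s\}$. By Lemma~\ref{lemma:classical}, each output $\qcS{dw}(\ketbra{\lambda_s}{\lambda_s})$ is diagonal in the eigenbasis of $\M{W}_{nm}$, so $S(\qcS{dw}(\ketbra{\lambda_s}{\lambda_s}))$ equals the Shannon entropy of its diagonal, which by Proposition~\ref{prop:model} is the $s$th row of $\M{T}_{nm}$; moreover the averaged output $\sum_s q_s\,\qcS{dw}(\ketbra{\lambda_s}{\lambda_s})$ is also diagonal in this basis, with diagonal equal to the output distribution $\V{q}\M{T}_{nm}$ (row-vector convention). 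Hence the bracketed Holevo quantity for this ensemble equals exactly the classical mutual information $I(X;Y)$ of the symmetric channel $\M{T}_{nm}$ driven by the input distribution $\V{q}$.

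Next I would take $\V{q}$ uniform, $q_s=1/d$. Because $\M{T}_{nm}$ in \eqref{eq:tr_matrix} is circulant, each of its columns is a permutation of $\{P_1,\ldots,P_d\}$, and $\sum_k P_k=1$ by \eqref{eq:pk_sum} (the $p_{ij}$ form a probability distribution), so the output distribution is again uniform and $S\!\left(\sum_s q_s\,\qcS{dw}(\ketbra{\lambda_s}{\lambda_s})\right)=\log_2 d$. Likewise every row of a circulant matrix carries the same entropy, so $\sum_s q_s\,H(\text{row }s)=H(\text{row of }\M{T}_{nm})$. Subtracting, the eigenstate ensemble of $\M{W}_{nm}$ attains the value $\log_2 d - H(\text{row of }\M{T}_{nm})$, whence $\chi(\qcS{dw})\geq \log_2 d - H(\text{row of }\M{T}_{nm})$. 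Since this holds for every $(n,m)\neq(0,0)$, optimizing the choice of $\M{W}_{nm}$, i.e., picking the one whose induced symmetric channel has the smallest row entropy, yields $\chi(\qcS{dw})\geq \log_2 d - \min_{n,m}H(\text{row of }\M{T}_{nm})$, which is \eqref{eq:CapQudit}.

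The step needing the most care, the main obstacle, is establishing that the Holevo quantity for the eigenstate ensemble equals the classical mutual information \emph{exactly}, rather than merely dominating or being dominated by it. This is precisely where Lemma~\ref{lemma:classical} does the work: it guarantees that every conditional output state and their mixture are simultaneously diagonal in the eigenbasis of $\M{W}_{nm}$, so the von Neumann entropies in \eqref{eq:HolevoChi} collapse without residue to Shannon entropies of the induced classical distributions. The remaining ingredient, that a symmetric (here, circulant) classical channel has its mutual information maximized by the uniform input, with value $\log_2 d$ minus the common row entropy, is a standard fact from classical information theory and can be invoked directly.
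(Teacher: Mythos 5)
Your proposal is correct and follows essentially the same route as the paper: restrict the ensemble to the orthonormal eigenstates of a fixed $\M{W}_{nm}$, invoke Lemma~\ref{lemma:classical} and Proposition~\ref{prop:model} to reduce the Holevo quantity to the mutual information of the classical symmetric channel $\M{T}_{nm}$, and then apply the standard symmetric-channel capacity formula before minimizing over $(n,m)$. The only difference is that you spell out the uniform-input/circulant-matrix computation that the paper compresses into a citation of the classical result.
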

\begin{proof}
See Methods section.
\end{proof}

{
The restriction on $d$ to be a prime number is primarily because the repetition of eigenvalues of $\M{W}_{nm}$ of a composite $d$ does not allow us to construct the channel transition matrix $\M{T}_{nm}$. The following remark provides us an alternative approach to lower bound the Holevo capacity of DWC of any $d$. 
}

\begin{remark}
{It is straightforward to show that $H\left(\text{row of } \M{T}_{nm}\right) = S\left( \qc{\mathrm{dw}}{\ket{\lambda}\bra{\lambda}_{nm}} \right)$ when $d$ is prime, where $\ket{\lambda}\bra{\lambda}_{nm}$ is the density matrix of any eigenstate of $\M{W}_{nm}$. Therefore, we can equivalently calculate}
\begin{align}
{
\chi\left(\qcS{\mathrm{dw}}\right) \geq \log_2 \left(d\right) - \min_{n,m }S\left( \qc{\mathrm{dw}}{\ket{\lambda}\bra{\lambda}_{nm}} \right)
}
\label{eq:l_bound_state}
\end{align}
{for prime $d$. Then, we can extend \eqref{eq:l_bound_state} to any $d$ by replacing the optimization on any $\pmb{\rho}$ with the optimization only on the eigenstates of $\M{W}_{nm}$ in \eqref{eq:Hard_Optimization}.}

\label{remark1}
\end{remark}

\begin{theorem}
Let us define a vector $\pmb{\zeta}\left(\V{p}\right) \in \mathbb{R}^{d}$ such that
\begin{align}
\pmb{\zeta}\left(\V{p}\right)=\M{S} \V{p}^{\downarrow} 
\end{align}
where the elements of $\V{p}^{\downarrow}$ are the elements of vector $\V{p} \in \mathbb{R}^{d^2}$ in descending order; the matrix $\M{S} \in \mathbb{R}^{{d} \times d^2}$ is given by
\begin{align}
\M{S} = 
\begin{bmatrix}
\V{1}_d^T & \V{0}_d^T & \V{0}_d^T & \cdots & \V{0}_d^T\\
\V{0}_d^T & \V{1}_d^T & \V{0}_d^T & \cdots & \V{0}_d^T\\
\V{0}_d^T & \V{0}_d^T & \V{1}_d^T & \cdots & \V{0}_d^T\\
\vdots & \vdots & \vdots & \ddots & \vdots \\
\V{0}_d^T & \V{0}_d^T & \V{0}_d^T & \cdots & \V{1}_d^T\\
\end{bmatrix}
\end{align}
where $\left( \cdot \right)^T$ denotes the transpose operation, and $\V{1}_d$ and $\V{0}_d$ are all-one and all-zero vectors of $d$ elements, respectively.
Then, the Holevo capacity of a DWC is
\begin{align}
\chi\left(\qcS{dw}\right) \leq \log_2 \left(d\right) - H\left( \pmb{\zeta} \left( \V{p}\right)\right),
\label{eq:u_bound}
\end{align}
where $\V{p}=\left[p_{00}~~p_{01}~~ \cdots ~~p_{nm}\right]^T$, whose elements are probabilities associated with respective Weyl operators $\M{W}_{nm}$.
\label{th:u_bound}
\end{theorem}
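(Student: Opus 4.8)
The plan is to prove the upper bound in two moves: first, reduce the Holevo quantity to ``\(\log_2 d\) minus the minimal output entropy,'' and second, show by a majorization argument that the eigenvalue vector of \emph{any} output state of the DWC is majorized by \(\pmb{\zeta}(\V{p})\), so that its entropy is at least \(H(\pmb{\zeta}(\V{p}))\).

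For the first move I would bound \eqref{eq:HolevoChi} termwise: for any ensemble \(\{p_i,\pmb{\rho}_i\}\) the first term satisfies \(S\!\left(\sum_i p_i\,\qc{\mathrm{dw}}{\pmb{\rho}_i}\right)\le \log_2 d\) because the output lives on a \(d\)-dimensional space, while \(\sum_i p_i\,S\!\left(\qc{\mathrm{dw}}{\pmb{\rho}_i}\right)\ge \min_{\pmb{\rho}}S\!\left(\qc{\mathrm{dw}}{\pmb{\rho}}\right)\). Since \(\pmb{\rho}\mapsto S(\qc{\mathrm{dw}}{\pmb{\rho}})\) is concave (a concave function of a linear image), its minimum over density operators is attained at a pure state, hence
\begin{align}
\chi\left(\qcS{dw}\right)\le \log_2 d-\min_{\ket{\psi}}S\!\left(\qc{\mathrm{dw}}{\ketbra{\psi}{\psi}}\right).
\end{align}
It therefore suffices to show \(S\!\left(\qc{\mathrm{dw}}{\ketbra{\psi}{\psi}}\right)\ge H(\pmb{\zeta}(\V{p}))\) uniformly over unit vectors \(\ket{\psi}\).

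Fix \(\ket{\psi}\), set \(\ket{\phi_{nm}}=\M{W}_{nm}\ket{\psi}\), so \(\pmb{\rho}_{\mathrm{out}}:=\qc{\mathrm{dw}}{\ketbra{\psi}{\psi}}=\sum_{n,m}p_{nm}\ketbra{\phi_{nm}}{\phi_{nm}}\). Since the Weyl operators are an orthonormal operator basis under the Hilbert--Schmidt inner product (\(\text{Tr}(\M{W}_{nm}^{\dag}\M{W}_{n'm'})=d\,\delta_{nn'}\delta_{mm'}\)), the completely depolarizing map \(\frac{1}{d^2}\sum_{n,m}\M{W}_{nm}(\cdot)\M{W}_{nm}^{\dag}\) sends every state to \(\M{I}_d/d\); applied to \(\ketbra{\psi}{\psi}\) this yields the tight-frame identity \(\sum_{n,m}\ketbra{\phi_{nm}}{\phi_{nm}}=d\,\M{I}_d\). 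Now, for \(k=1,\dots,d\), Ky Fan's principle gives \(\sum_{j=1}^{k}\lambda_j(\pmb{\rho}_{\mathrm{out}})=\max_{\Pi}\text{Tr}(\Pi\,\pmb{\rho}_{\mathrm{out}})\) over rank-\(k\) projectors \(\Pi\), and \(\text{Tr}(\Pi\,\pmb{\rho}_{\mathrm{out}})=\sum_{n,m}p_{nm}\,q_{nm}\) with \(q_{nm}:=\|\Pi\ket{\phi_{nm}}\|^2\). These obey \(0\le q_{nm}\le 1\) and, by the tight-frame identity, \(\sum_{n,m}q_{nm}=\text{Tr}(\Pi\,d\,\M{I}_d)=dk\); maximizing the linear functional \(\sum_{n,m}p_{nm}q_{nm}\) over this polytope saturates the \(dk\) largest coordinates, giving \(\sum_{j=1}^{k}\lambda_j(\pmb{\rho}_{\mathrm{out}})\le\sum_{i=1}^{dk}p_i^{\downarrow}=\sum_{l=1}^{k}\zeta_l(\V{p})\), where the last equality is exactly the block structure of \(\M{S}\). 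For \(k=d\) both sides equal \(1\), and \(\pmb{\zeta}(\V{p})\) is already nonincreasing, so the eigenvalue vector of \(\pmb{\rho}_{\mathrm{out}}\) is majorized by \(\pmb{\zeta}(\V{p})\); Schur-concavity of entropy then gives \(S(\pmb{\rho}_{\mathrm{out}})\ge H(\pmb{\zeta}(\V{p}))\), and combining with the displayed bound proves \eqref{eq:u_bound}. (The argument in fact goes through verbatim for mixed inputs, replacing \(\Pi\) by \(\M{W}_{nm}^{\dag}\Pi\M{W}_{nm}\), so the concavity reduction is optional.)

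The main obstacle I expect is the majorization step: one must argue cleanly that the linear-programming maximum over \(\{0\le q\le 1,\ \sum q=dk\}\) is attained by filling the \(dk\) largest entries of \(\V{p}\), and that ``sum of the \(dk\) largest \(p_{nm}\)'' is precisely \(\sum_{l=1}^{k}\zeta_l(\V{p})\) — together with the (harmless) observation that the optimal \(q\)-vector need not be realized by an actual rank-\(k\) projector, which is fine since we only need an upper bound on the partial eigenvalue sums. Secondary points to spell out are the tight-frame identity \(\sum_{n,m}\ketbra{\phi_{nm}}{\phi_{nm}}=d\,\M{I}_d\) and the reduction to pure inputs via concavity of \(\pmb{\rho}\mapsto S(\qc{\mathrm{dw}}{\pmb{\rho}})\).
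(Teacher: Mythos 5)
Your proof is correct and follows essentially the same route as the paper's: both establish the majorization $\pmb{\zeta}\left(\V{p}\right) \succ \V{\lambda}\left(\qc{\mathrm{dw}}{\pmb{\rho}}\right)$ by exploiting the identity $\sum_{n,m}\M{W}_{nm}\pmb{\rho}\M{W}_{nm}^{\dag} = d\,\M{I}_d$ together with the bounds $0 \leq \braket{\lambda_i|\pmb{\rho}_j|\lambda_i} \leq 1$ on the relevant overlaps, and then invoke Schur concavity of the entropy. Your Ky Fan plus linear-programming phrasing is a tidier, more rigorous rendering of the paper's informal ``weight redistribution'' argument for the partial-sum inequalities, and your explicit derivation of $\chi\left(\qcS{dw}\right) \leq \log_2\left(d\right) - \min_{\ket{\psi}} S\left(\qc{\mathrm{dw}}{\ketbra{\psi}{\psi}}\right)$ makes precise a reduction the paper leaves implicit.
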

\begin{proof}
See Methods section.
\end{proof}

\

In a $d$-dimensional Hilbert space, $d^2$ Weyl operators are defined whose indices are given in the form of 2-tuples, e.g., $(i,j)$. We define a set $\mathcal{W}$ that contains all the $d^2$ indices of defined Weyl operators. We call a set $\mathcal{D}$ a $d$-set if all its elements $\mathcal{D}_i $ for $i = 0,\cdots , d-1$ are non-overlapping $d$ element subsets of $\mathcal{W}$
\begin{align}
\mathcal{D} &= \left\{\mathcal{D}_i | \mathcal{D}_i \subset_d \mathcal{W}, \mathcal{D}_i \cap \mathcal{D}_j = \varnothing \text{ for } i \neq j, i,j = 0,\cdots , d-1 \right\}
\end{align}
where $\mathcal{A}\subset_d \mathcal{B}$ means that $\mathcal{A}$ is a $d$-element subset of $\mathcal{B}$, $\varnothing$ is the empty set, $\mathcal{A} \cap\mathcal{B}$ gives a set whose elements are the common elements of $\mathcal{A}$ and $\mathcal{B}$. In the $d$ dimensional Hilbert space, there are 
\begin{align}
\frac{1}{d!}\prod_{i = 0}^{d-1} \begin{pmatrix}
d^2 - id \\ d
\end{pmatrix}
\label{eq:n_d-sets}
\end{align}
different possible $d$-sets, where 
\begin{align*}
\begin{pmatrix}
n\\k
\end{pmatrix} = 
\frac{n!}{k! \left(n - k\right)!}
\end{align*}
are the binomial coefficients.

A $d$-set $\mathcal{D}$ whose all elements $\mathcal{D}_i$ satisfy the property
\begin{align}
mi - nj \text{ mod } d = k_i, \qquad \forall \left( i,j\right) \in \mathcal{D}_i
\label{eq:Achiev_property}
\end{align}
for some $n,m$, and some constants $k_i$ is called an achievable $d$-set. For example 
\begin{align}
\mathcal{D} &= \left\{
				\left\{
					(0,0),(2,1),(1,2) 
				\right\},
				\left\{
					(2,0),(1,1),(0,2) 
				\right\}, 
				\left\{
					(1,0),(0,1),(2,2) 
				\right\} 
				\right\}
\end{align}
is an achievable $d$-set for $\left(n,m\right) = \left(2,1 \right)$ but 
\begin{align}
\mathcal{D} &= \left\{
				\left\{
					(0,0),(0,1),(1,2) 
				\right\},
				\left\{
					(2,0),(1,1),(2,2) 
				\right\},  
				\left\{
					(1,0),(2,1),(0,2) 
				\right\} 
				\right\}
\end{align}
is a $d$-set which is not achievable.
\begin{theorem}
We arrange the elements of $\V{p}$ in nonincreasing order and collect the indices of $p_{nm}$ while preserving the order to form a $d$-set. The bounds of Theorem~\ref{th:pauliQudit}, and Theorem~\ref{th:u_bound} coincide if and only if (resp. only if) the obtained $d$-set is achievable and $d$ is a prime number (resp. a composite number). 
\label{th:Suff}
\end{theorem}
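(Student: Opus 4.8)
The plan is to connect the two bounds through a single scalar quantity: both involve subtracting an entropy term from $\log_2(d)$, so the bounds coincide exactly when $\min_{n,m} H(\text{row of }\M{T}_{nm})$ equals $H(\pmb{\zeta}(\V{p}))$. First I would observe that for any fixed $(n,m)\neq(0,0)$, the row of $\M{T}_{nm}$ is the probability vector $(P_1,\dots,P_d)$ from \eqref{eq:pk_sum}, which is obtained from $\V{p}$ by partitioning the $d^2$ indices $\mathcal{W}$ into the $d$ fibers $\{(i,j):mi-nj\equiv k-1\ (\mathrm{mod}\ d)\}$ and summing $p_{ij}$ over each fiber. Each such fiber has exactly $d$ elements (here primality of $d$ is used: the map $(i,j)\mapsto mi-nj\bmod d$ has all fibers of equal size $d$ when at least one of $n,m$ is a unit, which holds for prime $d$ whenever $(n,m)\neq(0,0)$). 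Hence every $(n,m)$ gives rise to a particular achievable $d$-set, and $H(\text{row of }\M{T}_{nm})$ is the entropy of the vector obtained by summing $\V{p}$ over the blocks of that $d$-set. Conversely, the construction in \eqref{eq:Achiev_property} shows every achievable $d$-set arises this way.

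The second ingredient is the Schur-concavity of Shannon entropy together with the observation that $\pmb{\zeta}(\V{p})$ is, by construction, the blockwise sum of $\V{p}^{\downarrow}$ under the \emph{greedy} $d$-set: the first block is the $d$ largest probabilities, the second block the next $d$ largest, and so on. I would show that among all $d$-sets $\mathcal{D}$, the greedy one majorizes every other, i.e. $\pmb{\zeta}(\V{p}) \succ \M{S}_{\mathcal{D}}\V{p}$ for the appropriate summing matrix $\M{S}_{\mathcal{D}}$ — intuitively, any "non-greedy" reassignment of a large entry into a block containing smaller entries only spreads mass more evenly, which can only decrease the sorted partial sums. By Schur-concavity this yields $H(\pmb{\zeta}(\V{p})) \le H(\text{row of }\M{T}_{nm})$ for all $(n,m)$, recovering the fact that \eqref{eq:u_bound} is indeed an upper bound below \eqref{eq:CapQudit}, and more importantly it localizes equality: $H(\pmb{\zeta}(\V{p})) = \min_{n,m} H(\text{row of }\M{T}_{nm})$ holds iff some achievable $d$-set realizes the greedy grouping, which is precisely the condition that the $d$-set formed by listing the indices of $\V{p}$ in nonincreasing order (and cutting into blocks of $d$) is achievable.

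For the prime case this gives the "if and only if": if the greedy $d$-set is achievable, pick $(n,m)$ witnessing it via \eqref{eq:Achiev_property}, and that channel's row entropy equals $H(\pmb{\zeta}(\V{p}))$, so the bounds meet; conversely if the bounds meet then the minimizing $(n,m)$ supplies an achievable $d$-set with entropy $H(\pmb{\zeta}(\V{p}))$, and by strict Schur-concavity (away from ties) this forces its block sums to equal $\pmb{\zeta}(\V{p})$ entrywise, i.e. it is a greedy grouping, so the listed $d$-set is achievable. For composite $d$ only the forward direction ("only if") survives: Theorem~\ref{th:pauliQudit} and its transition-matrix construction require prime $d$ (repeated eigenvalues break the $\M{T}_{nm}$ construction and the equal-size fiber property), so achievability of the greedy $d$-set is necessary — if it fails, every valid $(n,m)$-grouping is strictly dominated and the bounds cannot coincide — but it need not be sufficient because the lower bound \eqref{eq:CapQudit} itself is unavailable, and one must instead invoke the eigenstate-restricted optimization of Remark~\ref{remark1}, which may be strictly looser.

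The main obstacle I anticipate is the majorization claim $\pmb{\zeta}(\V{p}) \succ \M{S}_{\mathcal{D}}\V{p}$ for arbitrary $d$-sets: it needs a careful exchange argument showing that any $d$-set can be transformed into the greedy one by a sequence of swaps each of which only increases sorted partial sums, and one must be attentive to the fact that the output vectors are re-sorted before comparison. Handling the tie-breaking carefully is also delicate, since that is exactly where the "if and only if" in the prime case is sharp and where equality in Schur-concavity must be analyzed.
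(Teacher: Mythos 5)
Your proposal is correct and follows the same skeleton as the paper's proof: both rest on identifying achievable $d$-sets with the fiber partitions of $(i,j)\mapsto mi-nj \bmod d$ (equivalently, with the simulated classical symmetric channels of Proposition~\ref{prop:model}), and on observing that the upper bound of Theorem~\ref{th:u_bound} is the entropy of the block sums of the greedy partition of $\V{p}^{\downarrow}$, so that coincidence of the bounds amounts to some achievable $d$-set realizing that greedy grouping. Where you genuinely differ is the necessity direction. The paper disposes of it in one sentence by appealing to the bijectivity of the correspondence between achievable $d$-sets and simulated channels; that alone does not rule out a non-greedy achievable $d$-set whose block-sum vector happens to have the same Shannon entropy as $\pmb{\zeta}\left(\V{p}\right)$. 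You close this explicitly: the greedy block-sum vector majorizes the block-sum vector of every $d$-set, and strict Schur-concavity of $H$ forces any entropy-minimizing achievable $d$-set to have block sums equal (up to permutation) to $\pmb{\zeta}\left(\V{p}\right)$, hence each of its blocks to be a top-$kd$ multiset, i.e.\ a greedy grouping under some tie-breaking. Incidentally, the majorization step you flag as the main obstacle is easier than you fear: the sum of the $k$ largest block sums of any $d$-set is a sum of $kd$ of the $q_j$ and is therefore at most $q_1+\cdots+q_{kd}$, with equality of totals at $k=d$; no exchange argument is needed. The tie-breaking caveat you raise is real but harmless, since with ties the ``obtained $d$-set'' is itself only defined up to the choice of nonincreasing arrangement. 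Your treatment of the composite case (necessity only, because the symmetric-channel lower bound is unavailable and one falls back on Remark~\ref{remark1}) matches the paper's.
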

\begin{proof}
See Methods section.
\end{proof}

\begin{remark}
 If the two bounds coincide, we have
\begin{align}
{
\chi\left(\qcS{\mathrm{dw}}\right) = \log_2 \left(d\right) - \min_{n,m}H\left( \text{row of }\M{T}_{nm}\right), \qquad \left( n,m\right)\neq \left( 0,0\right).
}
\label{eq:CapQuditExact}
\end{align}
 However, the converse is not true as will be shown by the numerical examples in the next section.
\end{remark}

\section*{Discussion}
An efficient approximation for the capacity of classical-quantum channels has been discussed without exploiting any special properties of a given channel. For example, it takes 40,154 seconds in order to approximate the Holevo capacity of a Pauli qubit channel with a posteriori error of $\num{1.940e-3}$ \cite{SSE:16:T_IT}. In contrast to existing methods, the average time to calculate the (lower) bound  in this paper is of the order $10^{-4}$ seconds even for large $d$ by virtue of the use of special properties of DWCs.

We have strong numerical evidence that the lower bound is tighter and is saturated more often even when the two bounds do not coincide, as shown in the Figs.~\ref{fig:rand_chann_main}(a), \ref{fig:rand_chann_main}(b), and \ref{fig:rand_chann_main}(c) where the upper ($\chi_{\text{UB}}$) and the lower ($\chi_{\text{LB}}$) bounds (normalized by $\log_2 \left( d \right)$) are plotted for 1200 random channel realizations for $d=3, 4$, and $5$, respectively. In these figures, Holevo capacity by using \cite{Cor:04:PRA}
\begin{align}
\chi \left( \mathcal{N}_{\mathrm{dw}}\right)= \log_2 \left(d\right) - \min_{\pmb{\rho}}S\left( \mathcal{N}_{\mathrm{dw}}\left(\pmb{\rho}\right)\right)
\label{eq:Hard_Optimization}
\end{align} 
with the optimization performed via genetic algorithm $(\chi_{\text{GA}})$ is also presented. Comparison of $\chi_{\text{LB}}$, $\chi_{\text{UB}}$, and $\chi_{\text{GA}}$ shows that the frequency of coincidence of two bounds as well as the frequency of the saturation of the lower bound is higher for the case of $d=3$.

\begin{figure}[t!]
\centering
\includegraphics[width = 1\textwidth]{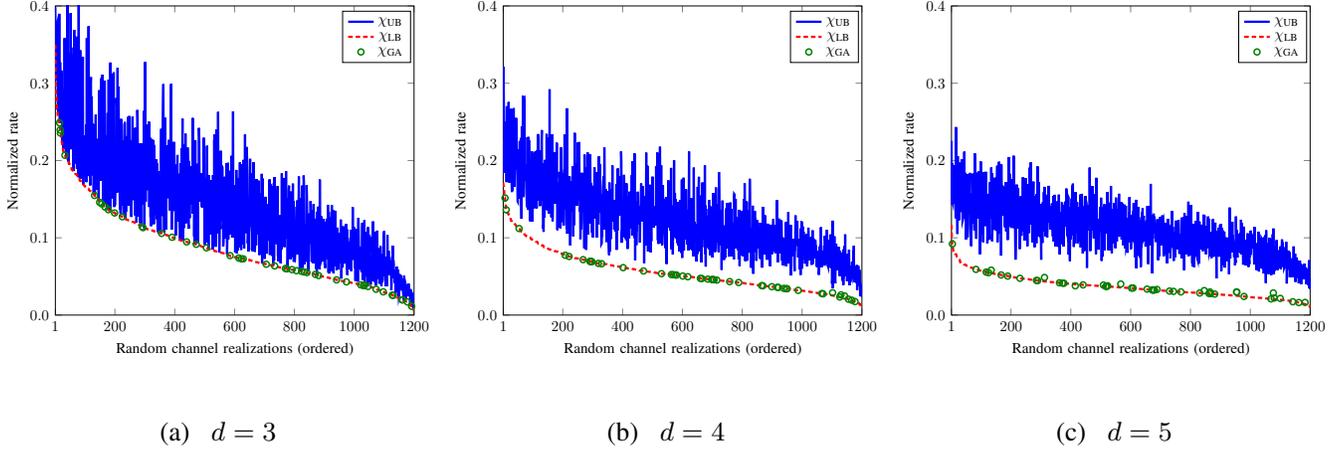}

\caption{$\chi_{\text{UB}}$, $\chi_{\text{LB}}$, and $\chi_{\text{GA}}$ of random channel realizations (in decreasing order of $\chi_{\text{UB}}$) when $d=3,4,5$. }
\label{fig:rand_chann_main}
\end{figure}
Our bounds not only ease the requirement of optimization for the calculation of tight bounds for a general DWC, but also allows to recover the analytic expressions for the special cases of DWC. 
For example, here we recover the analytic expression for the classical capacity of a qudit depolarizing channel using the approach developed above. A quantum depolarizing channel transforms an input state to the output state according to the following map
\begin{align}
\qc{\mathrm{d}}{\pmb{\rho}} = \left(1 - \mu\right)\pmb{\rho} + \mu \pmb{\pi}
\label{def:Depolarizing}
\end{align}
where $\pmb{\pi} = \M{I}_d/d$ is the maximally mixed state on the output Hilbert space. 

 In terms of Weyl operators,
\begin{align}
\pmb{\pi} = \frac{1}{d^2}\sum_{n,m = 0}^{d - 1} \M{W}_{nm}\pmb{\rho} \M{W}_{nm}^{\dag}.
\end{align}
Thus, we can rewrite  equation~\eqref{def:Depolarizing} as
\begin{align}
\qc{\mathrm{d}}{\pmb{\rho}} = \left(1 - \mu + \frac{\mu}{d^2}\right)\pmb{\rho} + \frac{\mu}{d^2}\sum_{\substack{ n,m = 0 \\ \left( n,m\right) \neq \left( 0,0\right) }}^{d - 1} \M{W}_{nm}\pmb{\rho} \M{W}_{nm}^{\dag}.
\end{align}
Therefore 
\begin{align}
p_{00} = 1 - \mu + \frac{\mu}{d^2}, \quad p_{nm} = \frac{\mu}{d^2} \quad \forall\, \left(n,m\right) \neq \left(0,0\right) 
\end{align} 
which shows that all $d$-sets (whether achievable or not) are equivalent in terms of summation of $p_{nm}$ over the elements $\mathcal{D}_i$. Therefore, we can choose an ordering of $p_{nm}$ such that the condition of Theorem~\ref{th:Suff} is satisfied and we can use  equation~\eqref{eq:u_bound} to calculate the Holevo capacity. From  equation~\eqref{def:Depolarizing} and the output vector of $\pmb{\zeta}\left(\V{p}\right) = { \left(r_0, r_1, \cdots, r_d \right)}$, we see that
\begin{align}
r_0 = 1 - \mu + \frac{\mu}{d}, \quad r_i = \frac{\mu}{d} \quad \text{for} \quad i = 1,\cdots , d-1.
\end{align}
Thus, the Holevo capacity $\chi \left( \mathcal{N}_\mathrm{d}\right)$ of this channel is
\begin{align}
\chi \left( \mathcal{N}_{\mathrm{d}}\right) &= \log_2\left(d\right) + \left(1 - \mu + \frac{\mu}{d} \right)\log_2\left(1 - \mu + \frac{\mu}{d} \right) + \left(d - 1\right)\frac{\mu}{d}\log_2\left(\frac{\mu}{d}\right)
\label{eq:HC_Dep}
\end{align}
which is equal to the classical capacity of the quantum depolarizing channel \cite{Kin:03:IEEE_T_IT}.

Additionally, it is easy to see that for a Pauli qubit channel ($d=2$), there are 3 possible $d$-sets which are all achievable. Therefore, both bounds are exact for the Pauli qubit (and \emph{all its special cases}) channel. With simple algebraic manipulations one can obtain the analytic expressions for the capacities of any of the special cases of the Pauli qubit channel \cite{SSE:16:T_IT}.

From Theorem~\ref{th:Suff}, we can also define special channels for which the two bounds always coincide. This approach gives us a class of quantum channels whose exact Holevo capacity can readily be calculated. We define two such channels here and call them one-parameter depolarizing like, and two-parameter depolarizing like channels, respectively.

The one-parameter depolarizing-like channel is defined as
\begin{align}
\qc{\mathrm{d}1}{\pmb{\rho}} = \left(1 - \xi \right) \M{W}_{ij}\pmb{\rho} \M{W}_{ij}^{\dagger} + \xi\pmb{\pi},
\end{align}
whose exact Holevo capacity is same as \eqref{eq:HC_Dep} with the depolarizing parameter $\xi$.

The two-parameter depolarizing-like channel is 
\begin{align}
\qc{\mathrm{d}2}{\pmb{\rho}} &= \left(1 - \eta \right)\M{W}_{ij}\pmb{\rho} \M{W}_{ij}^{\dagger} + \left(1 - \kappa \right)\M{W}_{nm}\pmb{\rho}\M{W}_{nm}^{\dagger} + \left(\eta + \kappa - 1 \right) \pmb{\pi}
\label{def:two_pauli}
\end{align}
 where $0 \leq \eta, \kappa \leq 1 , \text{and } 1 \leq \eta + \kappa \leq 2$.
 This channel is a further generalization of the one-parameter depolarizing like channel. The exact Holevo capacity of this channel can readily be calculated by Theorem~\ref{th:Suff}.

In this work we modeled a DWC as a classical symmetric channel for the task of classical communication. Through this modeling, we presented a simple to compute lower bound on the Holevo capacity of a given DWC of an arbitrary dimension. We also gave an intuitive upper bound which coincides with the lower bound under a certain condition. This (sufficient and necessary for a prime $d$, and necessary for a composite $d$) condition, however, is not frequently met despite the frequent convergence of the lower bound to the actual Holevo capacity as shown by the numerical examples.
 The lower bound was derived by noting the similarity of a quantum channel with a classical channel. An interesting future direction is to find similar cases where the results of classical information theory (which is more mature despite being a special case of quantum information theory) can be applied on the problems of quantum information theory with a little or no modification. Similarly, based on the equality of upper and lower bounds, one can define special channels for which these bounds always coincide. Such characterization of quantum channels can give us a class of channels whose exact Holevo capacity can readily be calculated.

\section*{Methods}
\subsection*{Proof of Lemma~\ref{lemma:classical}}
Since the DWC is a random unitary channel, the output of the channel is merely the state obtained by randomly applying one of the $d^2$ Weyl operators on the input. Thus, we need to show that operation of $\M{W}_{ij}$ on any eigenstate of $\M{W}_{nm}$ results into an eigenstate of $\M{W}_{nm}$.

Let
\begin{align}
\ket{\lambda} = \begin{bmatrix}
\alpha_0, \alpha_1,  \cdots , \alpha_{d - 1}
\end{bmatrix}^T
\label{eq:WeylEigenVector}
\end{align}
be a normalized eigenvector of $\M{W}_{nm}$ with the corresponding eigenvalue $\lambda$. From the eigenvalue relation $\M{W}_{nm}\ket{\lambda} = \lambda \ket{\lambda}$, and due to the property~\ref{property:shift_phase}, we get the following relation among the entries of vector of \eqref{eq:WeylEigenVector}
\begin{align}
\alpha_{\modA{m}{k}{d}} &= \lambda\omega^{-nk}\alpha_k,
\label{eq:AlphaRelation}
\end{align}
where the eigenvalues $\lambda$ are equidistant points on the unit circle (see Fig.~\ref{fig:eig_val}). Since we have obtained this relation from the condition of eigenvector, any vector satisfying above relation will be an eigenvector of $\M{W}_{nm}$.

Now let us consider the effect of any $\M{W}_{ij}$ on the vector of \eqref{eq:WeylEigenVector}. To this end, we let $\M{W}_{ij}\ket{\lambda} = \ket{\beta}$, and recall property~\ref{property:shift_phase} again to write
\begin{align}
\M{W}_{ij}\ket{\lambda}  
= \begin{bmatrix}
\alpha_{j} \\ \omega^{i}\alpha_{\modA{j}{1}{d}} \\ \vdots \\ \omega^{ki}\alpha_{\modA{j}{k}{d}} \\ \vdots
\end{bmatrix}
= \begin{bmatrix}
\beta_0 \\ \beta_1 \\ \vdots \\ \beta_{k} \\ \vdots
\end{bmatrix} 
= \ket{\beta}.
\end{align}
i.e., the $k$th entry of $\ket{\beta}$ is $\omega^{ki}\alpha_{\modA{j}{k}{d}}$.

If the elements of $\ket{\beta}$ exhibit a similar relation as \eqref{eq:AlphaRelation}, $\ket{\beta}$ is also an eigenvector of $\M{W}_{nm}$.
Repeated use of \eqref{eq:AlphaRelation} gives the following relation between the entries of $\ket{\beta}$
\begin{align}
\beta_{\modA{m}{k}{d}} = \lambda\omega^{mi - nj}\omega^{-nk}\beta_{k}
\end{align}
which essentially bears the same form as  \eqref{eq:AlphaRelation}; because $\lambda\omega^{mi - nj}$ is another eigenvalue of $\M{W}_{nm}$. Hence the vector $\ket{\beta} = \M{W}_{ij} \ket{\lambda}$ is an eigenvector of $\M{W}_{nm}$. Since the output state is a statistical mixture of orthonormal eigenstates of $\M{W}_{nm}$, it is diagonal in the same basis, i.e., in the eigenbasis of $\M{W}_{nm}$.


\subsection*{Proof of Proposition~\ref{prop:model}}
Let the input state be an eigestate $\ket{\lambda}$ of $\M{W}_{nm}$ corresponding to the eigenvalue $\lambda$. From the proof of Lemma~\ref{lemma:classical}, the application of $\M{W}_{ij}$ transforms the input state to the eigenstate of $\M{W}_{nm}$ corresponding to the eigenvalue $\lambda\omega^{mi - nj}$. Since $\omega = \exp \left(2\pi \iota / d \right)$, $\omega^{mi - nj}$ is always from the set $ \left\{\omega^0, \omega^1, \cdots , \omega^{d-1} \right\}$. Therefore, we can define,
\begin{align}
P_k = \sum_{ij:\omega^{mi - nj} = \omega^{k-1}}p_{ij}
\end{align}
as the transition probability of $\ket{\lambda}$ to the orthogonal state $\ket{\lambda\omega^{k-1}}$. {We can define the complete set of transition probabilities $P_k$, for  $k = 1,2, \cdots , d$ only if $\M{W}_{nm}$ does not have any repeated eigenvalues which is guaranteed only if $d$ is prime and $\left(n,m \right) \neq \left( 0,0 \right)$ (note the similarity between $\omega^{mi - nj}$ and the expression for $s$ in the definition of eigenvalues).}

Furthermore, we notice that the rows of $\M{T}_{nm}$ are permutations of each other and its columns are permutation of each other. Therefore, $\M{T}_{nm}$ in \eqref{eq:tr_matrix} defines a classical symmetric channel. 

\subsection*{Proof of Theorem~\ref{th:pauliQudit}}
From proposition~\ref{prop:model} we know that in this setting DWC acts as a classical symmetric channel. Since the capacity of a symmetric channel with $d$ inputs and outputs is given by \cite{CJ:12:JWS}
\begin{align}
{
C_{\text{Symmetric}} = \log_2 \left(d\right) - H\left( \text{row of transition matrix}\right),
}
\label{eq:CapSymmetric}
\end{align}
and we have restricted our input states to be from the eigenstates of Weyl operators, thus 
\begin{align*}
{
\chi\left(\qcS{dw}\right) \geq \log_2 \left(d\right) - \min_{n,m}H\left(\text{row of }\M{T}_{nm}\right), \qquad (n,m) \neq (0,0)
}
\end{align*}
{
where the condition $(n,m) \neq (0,0)$ along with the condition on $d$ to be prime ensures that we can model the given DWC as a classical symmetric channel with the channel transition matrix $\M{T}_{nm}$ by virtue of Proposition~\ref{prop:model}.
}

\subsection*{Proof of Theorem~\ref{th:u_bound}}
We can write \eqref{eq:PauliQudit} as
\begin{align}
\dw \left( \rho\right) &= \sum_{j = 1}^{d^2} q_j V_j \pmb{\rho} V_j^{\dagger} 
\label{eq:DWC_def_q}\\
&= \sum_{i = 1}^{d} \lambda_i \ket{\lambda_i}\bra{\lambda_i},
\label{eq:DWC_eigen}
\end{align}
where the vector $\V{q} = \left[ q_1, q_2, \cdots, q_{d^2}\right]$ is the vector of elements of $\V{p} = \left[p_{0,0}, p_{0,1}, \cdots, p_{d-1, d - 1} \right]$ arranged in descending order, denoted by $\V{q} = \V{p}^{\downarrow}$, and $V_j$ is the Weyl operator corresponding to $q_j$, i.e., $q_j = p_{n,m}\implies V_j = W_{n,m}$. The last equality is the eigendecomposition of $\dw \left( \rho\right)$, where $\lambda_j$ is the $j$th largest eigenvalue with $\ket{\lambda_i}$ being the corresponding normalized (unit norm) eigenvector. 

Let us denote by $D_{\pmb{\rho}} \left( \cdot \right)$ the mapping 
$$
D_{\pmb{\rho}}: \V{p} \rightarrow \V{\lambda}\left( \dw \left( \pmb{\rho}\right) \right),
$$
where $\pmb{\rho}$ is a normalized (unit trace) pure state input to the DWC, and $\V{\lambda}\left( \pmb{\sigma} \right)$ is the vector of eigenvalues of $\pmb{\sigma}$ in descending order. Then we can claim:
\begin{itemize}
	\item[1.] $D_{\pmb{\rho}}\left( \V{p}\right) = \M{T}_{\pmb{\rho}} \V{q}$, where $\V{q} = \V{p}^{\downarrow}$, and the $(i,j)$th element $T_{i,j}\in \left[ 0,1\right]$ of the matrix $\M{T}_{\pmb{\rho}}\in \mathbb{R}^{d\times d^2}$ is defined as
	\begin{align}
	T_{i,j} = \braket{\lambda_i|\pmb{\rho}_j|\lambda_i},
	\label{eq:T_def}
	\end{align}
	where $\pmb{\rho}_j = \M{V}_j \pmb{\rho} \M{V}_j$.\\
	\emph{Proof.} Since,
	\begin{align}
	\lambda_i &= \braket{\lambda_i|	\dw \left( \pmb{\rho}\right)| \lambda_i}\\
	&= \sum_{j = 1}^{d^2} q_j \braket{\lambda_i|\pmb{\rho}_j|\lambda_i}.
	\end{align}
	Therefore,
	\begin{align} 
	\begin{bmatrix}
	\lambda_1\\ \lambda_2 \\ \vdots \\ \lambda_d
	\end{bmatrix}
	= \begin{bmatrix}
	T_{1,1} & T_{1,2} & \cdots & T_{1,d^2}\\
	T_{2,1} & T_{2,2} & \cdots & T_{2,d^2}\\
	\vdots & \vdots & \ddots & \vdots\\
	T_{d,1} & T_{d,2} & \cdots & T_{d,d^2}
	\end{bmatrix}
	\begin{bmatrix}
	q_1 \\ q_2 \\ \vdots \\ q_d^2
	\end{bmatrix}.
	\end{align}
	\item[2.] Each column of $\M{T}$ sums to one, each row of $\M{T}$ sums to $d$.\\
	\emph{Proof.} Summation on the $j$th column is
	\begin{align}
	\sum_{i = 1}^{d} T_{i,j} &= \sum_{i = 1}^{d} \braket{\lambda_i|\pmb{\rho}_j|\lambda_i}\\
	& = \mathrm{tr}\left( \pmb{\rho}_j\right)\\
	&= 1.
	\end{align} 
	The first equality is from the definition of $T_{i,j}$, second equality follows from the fact that the set of  eigenvectors form a complete orthonormal basis, third from the fact that $\M{V}_i$ are unitary operators and the input $\pmb{\rho}$ has the unit trace. \\
	Summation on the $i$th row is
	\begin{align}
	\sum_{j = 1}^{d^2} T_{i,j} &= \sum_{j = 1}^{d^2} \braket{\lambda_i|\pmb{\rho}_j|\lambda_i}\\
	&= \braket{\lambda_i|\sum_{j = 1}^{d^2}\pmb{\rho}_j| \lambda_i}\\
	&= d \braket{\lambda_i|\M{I}| \lambda_i}\\
	&= d,
	\end{align}
	where $\M{I}$ is the identity matrix. The first equality follows again from the definition of $T_{i,j}$, the second equality from left and right distributive property of matrix products, third from the fact that
	\begin{align}
	\sum_{j = 1}^{d^2}\rho_j &= \sum_{j = 1}^{d^2}V_j\rho V_j\\
	&= d\M{I}.
	\end{align}
	Last equality is the consequence of $\ket{\lambda_i}$ having the unit norm. 
	\item[3.] Finally, the majorization relation $\V{s} = \M{S}\V{p}^{\downarrow} \succ \M{T}_{\pmb{\rho}} \V{p}^{\downarrow} = \V{t}$ holds for any pure state $\pmb{\rho}$.\\
	\emph{Proof.} First note that $\M{S}$ satisfies the conditions of being a valid $\M{T}_{\pmb{\rho}}$, i.e., $S_{i,j}\in [0,1]$, columns sum to 1, and rows sum to $d$. Also, by definitions of $\M{S}$ and $\M{T}_{\pmb{\rho}}$ the elements of $\V{s}$ and $\V{t}$ are already in the descending order, so no ordering is required on these vectors. The majorization relation is true if
	\begin{align}
	\sum_{\ell = 1}^{k} s_{\ell} \geq \sum_{\ell = 1}^{k} t_{\ell}, \quad \text{for } k = 1, \cdots, d,
	\label{eq:Maj_cond_1}
	\end{align}
	with equality for $k = d$.
	We can write
	\begin{align}
	\sum_{\ell = 1}^{k} s_{\ell} 
	&= \sum_{\ell = 1}^{k} \sum_{j = 1}^{d^2} S_{\ell,j}q_j \label{eq:S_sum}\\
	&= q_1 + q_2 + \cdots + q_{kd},
	\end{align}
	where $S_{\ell,j}$ is the $(\ell,j)$th element of $S$. Similarly,
	\begin{align}
	\sum_{\ell = 1}^{k} t_{\ell} 
	&= \sum_{\ell = 1}^{k} \sum_{j = 1}^{d^2} T_{\ell,j}q_j.\label{eq:T_sum}
	\end{align}	
	Both \eqref{eq:S_sum} and \eqref{eq:T_sum} can be seen as the weighted sum of $q_j$, where the maximum weight of each element ($T_{\ell,j}\leq 1$) as well as the sum of the weights ($\sum_{j = 1}^{d^2}T_{\ell,j} = d$) is fixed.\\
	We can see that the left hand side of \eqref{eq:Maj_cond_1} is the sum of $kd$ largest elements of $\V{q}$, i.e., maximum possible weights have been assigned to the largest elements of $\V{q}$. On the other hand, for any valid $\M{T}_{\pmb{\rho}}\neq \M{S}$, on the right hand side some weightage has been taken away from the larger elements and distributed among the smaller elements.	Since for any nonnegative real numbers $a_1 \geq a_2$, we trivially have
	$$
	a_1 \geq w a_1 + \left( 1 - w\right) a_2.
	$$ 
	Therefore, inequality in \eqref{eq:Maj_cond_1} holds for all $k = 1, \cdots, d$, for any valid $\M{T}_{\pmb{\rho}}$. The equality for $k = d$ holds because the left hand side is the sum of all $q_j$, and the right hand side
	\begin{align}
	\sum_{j = 1}^{d^2} q_j\sum_{\ell = 1}^{d}T_{\ell,j} = \sum_{j = 1}^{d^2} q_j
	\end{align}
	is also the sum of all $q_j$. Hence the majorization relation 
	\begin{align}
	\V{s}\succ \V{t}
	\label{eq:Main_major}
	\end{align} 
	holds for any pure input state $\pmb{\rho}$.
\end{itemize}
From \eqref{eq:Main_major}, it follows that
\begin{align}
\V{\zeta} \left( \V{p} \right) \succ \V{\lambda}\left( \dw \left( \pmb{\rho}\right) \right),
\label{eq:Main_major_2}
\end{align}
for any \emph{pure} input state $\pmb{\rho}$. Since the pure states are optimal for achieving the capacity \cite[Theorem~13.3.2]{WIL:11:ARX}, we can state more broadly that \eqref{eq:Main_major_2} holds for any input state $\rho$.\\
Finally, the main claim of the Theorem~\ref{th:u_bound}, inequality \eqref{eq:u_bound}, follows from the Schur concavity of Shannon entropy.


\subsection*{Proof of Theorem~\ref{th:Suff}}

{
We first observe that the condition on the summation in \eqref{eq:pk_sum} for the lower bound, and the condition on a $d$-set to be achievable \eqref{eq:Achiev_property} are essentially the same and result in the same $d$-element partitioning and ordering of $p_{nm}$. Thus, in a prime dimension $d$, every achievable $d$-set corresponds to a classical symmetric channel that can be simulated by DWC for some $n,m$.
}

{
On the other hand, the upper bound is obtained by ordering the elements of $p_{nm}$ in a nonincreasing order. Therefore, the achievability of the $d$-set formed by the indices of $p_{nm}$ when the $p_{nm}$ are arranged in a nonincreasing order is sufficient for the existence of a simulated classical symmetric channel of prime dimension that achieves the upper bound. Similarly, since the correspondence of achievable $d$-sets to a simulated classical symmetric channel is bijective, therefore the conincidence of two bounds necessarily implies the achievability of the $d$-set formed above.
}

{
For a composite $d$, the correspondence between the simulated classical symmetric channel to the achievable $d$-sets is injective-only. Therefore the above condition is necessary but no longer sufficient for the coincidence of two bounds.
}

\subsection*{Derivation of Eigenvalues of Discrete Weyl Operators}

The eigenvalues of a Weyl operator $\M{W}_{nm}$ are given by 
\begin{align}	\label{eq:eig_1}
\lambda_s = \omega^{mn\frac{\left(d-1 \right)}{2} + s}
\end{align}
where $s \in  \left\{ \modS{mk}{nj}{d}\right\}$ for $j,k = 0,\cdots , d-1$. Note that Weyl operators operating on a prime dimensional Hilbert space have $d$ distinct eigenvalues {(and we can simply state that $s=0,1,\cdots , d-1$)} except for $\M{W}_{00}$. On the other hand, some Weyl operators of a composite dimension may have repeated eigenvalues.

\subsubsection*{Sketch of the Proof}

The sketch of the proof is as follows. We use a previously known result \cite{{BHN:07:JPA}} to obtain an equation that has exactly $\ell$ distinct solutions, and all these $\ell$ solutions are the eigenvalues of a Weyl operator $\M{W}_{nm}$. Then, we show that \eqref{eq:eig_1} generates all $\ell$ solutions of the said equation.

It is shown in \cite[Theorem~4]{BHN:07:JPA} that the distinct eigenvalues $\tilde{\nu}_k$ (upto an appropriate phase factor) of Weyl operators operating on a $d$-dimensional Hilbert space are given by  
\begin{align}
\tilde{\nu}_k = \exp\left(2\pi \iota k/\ell\right),
\label{eq:savior}
\end{align}
for $0\leq k \leq \ell - 1$, where $\ell$ is either equal to $d$ or is some divisor of $d$. Let $p'$ be the said phase factor, then the distinct eigenvalues $\nu_k$ (with the exact phase) of a Weyl operator $\M{W}_{nm}$ are given by 
\begin{align}
\nu_k = p'\exp\left(2\pi \iota k/\ell\right).
\label{eq:eig_ref}
\end{align}
Note that \eqref{eq:eig_ref} are $\ell$th roots of some $p = \left( p'\right)^\ell$. In the following we derive some properties of $\ell$, and obtain an explicit expression for $p$. Then, we show that \eqref{eq:eig_1} generates the same eigenvalues as \eqref{eq:eig_ref} with the correct phase $p'$.

\subsubsection*{Determining $\ell$ and $p$} 

If an operator $\M{A}$ has eigenvalues $\left\{\mu_1, \mu_2, \cdots , \mu_n\right\}$, then the eigenvalues of $\M{A}^x$ are $\left\{\mu_1^x, \mu_2^x, \cdots , \mu_n^x\right\}$ \cite{Str:16:WCP}. Since the eigenvalues of $\M{W}_{nm}$ are all $\ell$th roots of $p$, $\left(\M{W}_{nm}\right)^{\ell}$ has only one eigenvalue i.e., $p$. Combining this fact with the fact that $\M{W}_{nm}$ are full rank matrices, and any similarity transform of identity results into identity, we deduce that 
\begin{align}
\left(\M{W}_{nm}\right)^{\ell} = pI.
\label{eq:ell_power}
\end{align}
Furthermore, since there does not exist any $\ell' < \ell$, such that $\left(\nu_k\right)^{\ell'} = p''$, for some $p''$ and for all $0\leq k \leq \ell - 1$, therefore $\ell$ is the smallest number such that $\left(\M{W}_{nm}\right)^{\ell}$ is proportional to $I$. The explicit expression for $\left(\M{W}_{nm}\right)^{q}$ for any integer $q$ is obtained in the following Lemma.

\begin{lemma}
	For any integer $q$ and a Weyl operator $\M{W}_{nm}$
	\begin{align}
	\left(\M{W}_{nm}\right)^{q} = \sum_{k = 0}^{d-1} \omega^{\left(q k + \frac{q\left(q - 1\right)}{2} m \right)n}\ket{k}\bra{\modA{k}{q m}{d}}.
	\label{eq:lemma_power}
	\end{align}
	\label{lemma:l_power}
\end{lemma}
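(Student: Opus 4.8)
The plan is to prove \eqref{eq:lemma_power} by induction on $q \geq 0$ via the recursion $\left(\M{W}_{nm}\right)^{q+1} = \M{W}_{nm}\left(\M{W}_{nm}\right)^{q}$ together with the defining expression \eqref{def:WeylOper}, and afterwards to extend to negative integers by taking inverses. The base case is immediate: at $q = 0$ the right-hand side of \eqref{eq:lemma_power} is $\sum_{k}\ket{k}\bra{k} = \M{I}_d$, and at $q = 1$, where $\frac{q(q-1)}{2} = 0$, it reduces to $\sum_{k}\omega^{kn}\ket{k}\bra{\modA{k}{m}{d}} = \M{W}_{nm}$ as in \eqref{def:WeylOper}.

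For the inductive step I would left-multiply the assumed expression for $\left(\M{W}_{nm}\right)^{q}$ by $\M{W}_{nm} = \sum_{k'}\omega^{k'n}\ket{k'}\bra{\modA{k'}{m}{d}}$ and collapse the sum over $k$ using the orthonormality relation $\delta_{(k'+m)\bmod d,\,k}$, which forces $k = (k'+m)\bmod d$. The bra index then becomes $(k+qm)\bmod d = (k'+(q+1)m)\bmod d$, and — using that $\omega^{kn}$ depends only on $k \bmod d$, so indices may be freely reduced modulo $d$ inside exponents — the accumulated phase exponent becomes $k'n + \bigl(q(k'+m) + \tfrac{q(q-1)}{2}m\bigr)n = \bigl((q+1)k' + (q + \tfrac{q(q-1)}{2})m\bigr)n$. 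The elementary identity $q + \tfrac{q(q-1)}{2} = \tfrac{q(q+1)}{2} = \tfrac{(q+1)\bigl((q+1)-1\bigr)}{2}$ then shows this is exactly the exponent in \eqref{eq:lemma_power} with $q$ replaced by $q+1$, which closes the induction for $q \geq 0$.

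To cover negative $q$, I would observe that $\M{W}_{nm}$ is unitary ($\M{W}_{nm}^{\dagger}\M{W}_{nm} = \M{I}_d$ by a one-line check), hence invertible; moreover the very same single matrix multiplication, carried out with two arbitrary integer exponents, shows that the right-hand side of \eqref{eq:lemma_power} evaluated at $q$ times that evaluated at $r$ equals the one evaluated at $q+r$, the only arithmetic input being the Pascal-type identity $qr + \tfrac{q(q-1)}{2} + \tfrac{r(r-1)}{2} = \tfrac{(q+r)(q+r-1)}{2}$. Taking $r = -q$ and using the $q=0$ value $\M{I}_d$ identifies the right-hand side at $-q$ as the inverse of $\left(\M{W}_{nm}\right)^{q}$, i.e. $\left(\M{W}_{nm}\right)^{-q}$, so \eqref{eq:lemma_power} holds for all integers. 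The argument is entirely routine; the only real care needed is the bookkeeping of the two index shifts modulo $d$ and checking that the coefficient of $mn$ telescopes, i.e. $\tfrac{q(q-1)}{2} + q = \tfrac{(q+1)q}{2}$. A shortcut that avoids the induction altogether is to apply Property~\ref{property:shift_phase} directly $q$ times to a basis vector $\ket{j}$: the net up-shift is $qm$ and the $q$ successively accumulated phases contribute an exponent $n\bigl(qj - \tfrac{q(q+1)}{2}m\bigr)$, which rearranges into \eqref{eq:lemma_power}; I would present the induction as the primary, self-contained proof and mention this alternative in passing.
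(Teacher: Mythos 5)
Your proof is correct and follows essentially the same route as the paper's: induction on $q$ using the definition \eqref{def:WeylOper} and orthonormality to collapse the double sum, with the identity $q+\tfrac{q(q-1)}{2}=\tfrac{q(q+1)}{2}$ closing the step (the paper right-multiplies by $\M{W}_{nm}$ where you left-multiply, which is immaterial). The one genuine addition is your treatment of negative $q$ via the group law $\mathrm{RHS}(q)\,\mathrm{RHS}(r)=\mathrm{RHS}(q+r)$: the lemma is stated for any integer $q$, but the paper's induction only covers $q\geq 1$, so your argument is actually the more complete one.
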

\begin{proof}
	We prove this result by induction. Since Weyl operator $\M{W}_{nm}$ operating on a $d$-dimensional Hilbert space is defined as 
	\begin{align}
	\M{W}_{nm} = \sum_{k = 0}^{d - 1} \omega^{kn}\ket{k}\bra{\modA{k}{m}{d} },
	\label{eq:def:Weyl}
	\end{align}
	we have
	\begin{align}
	\left(\M{W}_{nm}\right)^2 &= \sum_{k = 0}^{d - 1} \omega^{kn}\ket{k}\bra{\modA{k}{m}{d} } \sum_{j = 0}^{d - 1} \omega^{jn}\ket{j}\bra{\modA{j}{m}{d} },\\
	&= \sum_{k = 0}^{d - 1} \omega^{\left(2k + m\right)n}\ket{k}\bra{\modA{k}{2m}{d} },
	\label{eq:def:Weyl_2}
	\end{align}
	where we have used the orthonormality of basis vectors.
	
	Now we assume that \eqref{eq:lemma_power} is true for $\left(\M{W}_{nm}\right)^{q - 1}$, i.e., 
	\begin{align}
	\left(\M{W}_{nm}\right)^{q-1} = \sum_{k = 0}^{d-1} \omega^{\left(\left(q-1\right) k + \frac{\left(q-1\right)\left(q-2\right)}{2} m \right)n}\ket{k}\bra{\modA{k}{\left(q-1\right) m}{d}},
	\label{eq:def:Weyl_l_1}
	\end{align}
	then, 
	\begin{align}
	\left(\M{W}_{nm}\right)^{q-1}\M{W}_{nm}  &= \sum_{k = 0}^{d-1} \omega^{\left(\left(q-1\right) k + \frac{\left(q-1\right)\left(q-2\right)}{2} m \right)n}\ket{k}\bra{\modA{k}{ \left(q-1\right) m}{d}}\sum_{j = 0}^{d - 1} \omega^{jn}\ket{j}\bra{\modA{j}{m}{d} },\\
	&= \sum_{k = 0}^{d-1} \omega^{\left(\left(q-1\right) k + \frac{\left(q-1\right)\left(q-2\right)}{2} m + k + \left(q - 1\right)m \right)n}\ket{k}\bra{\modA{k}{\left(q-1\right) m + m}{d}},\\
	&= \sum_{k = 0}^{d-1} \omega^{\left(q k +  \frac{q\left(q - 1\right)}{2}  m \right)n}\ket{k}\bra{\modA{k}{q m}{d}},
	\end{align}
	which recovers the expressions \eqref{eq:def:Weyl}, \eqref{eq:def:Weyl_2}, and \eqref{eq:def:Weyl_l_1} for $q = 1, 2,$ and $q - 1$, respectively, and proves the statement of our lemma.
\end{proof}

In the following Lemma we show that $\ell$ is always a divisor of $d$.

\begin{lemma}
	For a $d$-dimensional Weyl operator $\M{W}_{nm}$, the minimum $\ell\in \left\{1,2,\cdots , d \right\}$ such that $\left(\M{W}_{nm}\right)^\ell = pI$ is always a divisor of $d$.
\end{lemma}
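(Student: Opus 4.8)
The plan is to make the condition ``$(\M{W}_{nm})^{\ell}$ is a scalar'' completely explicit using Lemma~\ref{lemma:l_power}, and then read off the minimal $\ell$ by elementary number theory. By Lemma~\ref{lemma:l_power},
\begin{align}
\left(\M{W}_{nm}\right)^{\ell} = \sum_{k=0}^{d-1}\omega^{\left(\ell k + \frac{\ell(\ell-1)}{2}m\right)n}\ket{k}\bra{\modA{k}{\ell m}{d}},
\end{align}
so $(\M{W}_{nm})^{\ell}$ is diagonal in the computational basis exactly when $\ell m \equiv 0 \Mod d$; otherwise it is a nontrivial monomial matrix (it sends some $\ket{j}$ to a scalar multiple of a \emph{different} $\ket{k}$), which can never equal $pI$. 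Hence the first requirement on $\ell$ is $d \mid \ell m$.

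Second, assuming $d\mid \ell m$, the operator reduces to $\sum_{k}\omega^{\ell k n}\,\omega^{\frac{\ell(\ell-1)}{2}mn}\ket{k}\bra{k}$, and I would compare the $k=0$ and $k=1$ diagonal entries to conclude that all the entries coincide if and only if $\omega^{\ell n}=1$, i.e. $d\mid \ell n$, in which case indeed $(\M{W}_{nm})^{\ell}=pI$ with $p=\omega^{\frac{\ell(\ell-1)}{2}mn}$. Writing $d_1=d/\gcd(d,m)$ and $d_2=d/\gcd(d,n)$, the $\ell$ satisfying $d\mid \ell m$ are precisely the positive multiples of $d_1$, and those satisfying $d\mid \ell n$ are precisely the positive multiples of $d_2$ (the standard consequence of coprimality after cancelling the gcd). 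Therefore the minimal $\ell$ that works is $\ell=\mathrm{lcm}(d_1,d_2)$.

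To finish, I would observe that $d_1$ and $d_2$ are both divisors of $d$, and the least common multiple of two divisors of $d$ again divides $d$, since $d$ is itself a common multiple of $d_1$ and $d_2$ and the lcm divides every common multiple. Hence $\ell\mid d$; in particular $\ell\le d$, so the minimum is genuinely attained within $\{1,\dots,d\}$, and the degenerate case $m=n=0$ falls out under the convention $\gcd(d,0)=d$, giving $\ell=1$ and $\M{W}_{00}=\M{I}_d$.

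The hard part is not conceptual but lies in being careful with the two ``only if'' directions --- that a non-diagonal monomial matrix cannot be scalar, and that agreement of merely the $k=0$ and $k=1$ diagonal entries already forces $\omega^{\ell n}=1$ --- both of which become one-line observations once the closed form of $(\M{W}_{nm})^{\ell}$ is written down; everything after that is routine divisibility arithmetic about divisors and least common multiples.
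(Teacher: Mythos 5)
Your proof is correct, but it takes a genuinely different route from the paper's. The paper argues abstractly: it observes (via Lemma~\ref{lemma:l_power}) that $\left(\M{W}_{nm}\right)^d$ is always proportional to $\M{I}$, assumes $\ell\nmid d$ so that $d=e\ell+r$ with $0<r<\ell$, and derives a contradiction from $\left(\M{W}_{nm}\right)^d=p^e\left(\M{W}_{nm}\right)^r$ together with the minimality of $\ell$ --- the standard ``order of an element divides any annihilating exponent'' argument. You instead use the closed form of $\left(\M{W}_{nm}\right)^{\ell}$ to characterize \emph{all} scalar powers, showing $\left(\M{W}_{nm}\right)^{\ell}\propto\M{I}$ if and only if $d\mid\ell m$ and $d\mid\ell n$, whence the set of admissible $\ell$ is exactly the set of common multiples of $d/\gcd(d,m)$ and $d/\gcd(d,n)$ and the minimum is their least common multiple, a divisor of $d$. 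Your individual steps all hold up: a shifted monomial matrix cannot be scalar, comparing the $k=0$ and $k=1$ diagonal entries does force $\omega^{\ell n}=1$, and the divisibility arithmetic (including the $m=n=0$ degenerate case) is routine. What your approach buys is an explicit formula $\ell=\mathrm{lcm}\left(d/\gcd(d,m),\,d/\gcd(d,n)\right)=d/\gcd(d,m,n)$, which immediately recovers the paper's subsequent observations --- the conditions \eqref{eq:con_m} and \eqref{eq:con_n}, and the fact that $\ell=d$ for prime $d$ and $\left(n,m\right)\neq\left(0,0\right)$ --- whereas the paper's shorter argument establishes only the divisibility itself.
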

\begin{proof}
	Assume that $\ell$ is not a divisor of $d$, then
	\begin{align}
	d = e\ell + r,
	\end{align}
	for some $e$ and $0< r < \ell $. Then,
	\begin{align}
	\left(\M{W}_{nm}\right)^d &= \left(\M{W}_{nm}\right)^{\left( e\ell + r\right)},\\
	&= \left(\M{W}_{nm}\right)^{e\ell}\left(\M{W}_{nm}\right)^r,\\
	&= p^e I \left(\M{W}_{nm}\right)^r,
	\end{align}
	which is not proportional to identity since $\ell$ is the least power such that $\left(\M{W}_{nm}\right)^\ell$ is proportional to identity, and $0 < r < \ell$. It is easy to see from Lemma~\ref{lemma:l_power} that $\left(\M{W}_{nm}\right)^d$ is always proportional to identity. Therefore, our starting assumption that $\ell$ is not a divisor of $d$ results into a contradiction and is wrong. Hence, the minimum $\ell$ such that  $\left(\M{W}_{nm}\right)^\ell = pI$ is always a divisor of $d$.
\end{proof}

Now writing the condition \eqref{eq:ell_power} explicitly in the Dirac's notation, i.e.,
\begin{align}
\sum_{k = 0}^{d-1} \omega^{\left(\ell k + \frac{\ell \left( \ell - 1\right)}{2} m \right)n}\ket{k}\bra{\modA{k}{\ell m}{d}} = \sum_{k = 0}^{d-1} p \ket{k}\bra{k},
\end{align}
we obtain
\begin{align}
p = \omega^{\left(\ell k + \frac{\ell \left( \ell - 1\right)}{2} m \right)n},
\label{eq:def_p}
\end{align}
and the following conditions on $\ell$
\begin{align}
\ell m &= a d \quad \because \modA{k}{\ell m}{d} = k \label{eq:con_m},\\
\ell n &= b d \quad \because \omega^{\left(\ell k + \frac{\ell \left( \ell - 1\right)}{2} m \right)n} = p \; \forall k,\label{eq:con_n}
\end{align}
where $0\leq a \leq m$ and $0\leq b\leq n$ are some positive integers. Note that since $\ell$ is the smallest number such that $\left( \M{W}_{nm}\right)^\ell$ is proportional to identity, therefore $\ell$ is the smallest number satisfying conditions \eqref{eq:con_m} and \eqref{eq:con_n}. It is easy to see from the minimality of $\ell$ that $a,b$, and $\ell$ are relatively prime (they do not have a common divisor greater than 1).

Using \eqref{eq:con_m} and \eqref{eq:con_n}, we can simplify \eqref{eq:def_p} as 
\begin{align}
p &= \omega^{\frac{\ell \left( \ell - 1\right)}{2} m n}\\
&= \begin{cases}
\omega^{-\ell n m /2},\quad &\text{ when $\ell$ is even}\\
\omega^{u\ell n m} = 1, \quad &\text{ when $\ell$ is odd},
\end{cases}
\label{eq:p_simplified}
\end{align}
where $u = \left( \ell - 1\right)/2$ is an integer.

Therefore, the eigenvalues of $\M{W}_{nm}$ are the $\ell$th roots of $p$, i.e., they are $\ell$ distinct numbers that satisfy
\begin{align}
p - x^\ell = 0,
\label{eq:l_equation}
\end{align}
where $p$ is given by \eqref{eq:p_simplified} and $\ell$ satisfies the conditions \eqref{eq:con_m} and \eqref{eq:con_n}. In the next section we show that \eqref{eq:eig_1} generates $\ell$ distinct numbers that satisfy \eqref{eq:l_equation}.

\subsubsection*{Generating $\ell$th Roots of $p$ by \eqref{eq:eig_1}}

As stated in the main text (given by \eqref{eq:eig_1} here) that the eigenvalues of any $\pmb{W}_{nm}$ on a $d$-dimensional Hilbert space are given by
\begin{align}
\lambda = \omega^{nm\frac{\left(d-1 \right)}{2} + \left( mk - nj \right)},
\label{eq:eig1}
\end{align}
for $j,k = 0,\cdots , d-1$; where we have substituted the variable $s$ with its expression for the clarity in the upcoming calculations. The validity of this expression can be established by showing that it generates all $\ell$th roots of $p$. We show this in two steps, i) every value generated by \eqref{eq:eig1} satisfies \eqref{eq:l_equation}, i.e., \eqref{eq:eig1} generates $\ell$th roots of $p$,  and ii) it generates exactly $\ell$ unique values, i.e., it generates \emph{all} $\ell$th roots of $p$.

Note that
\begin{align}
\lambda^\ell &= \omega^{\ell nm\frac{\left(d-1 \right)}{2} + \ell\left( mk - nj \right)},\\
&= \omega^{\ell nm\frac{\left(d-1 \right)}{2}},
\end{align}
where the last equality follows from conditions \eqref{eq:con_m} and \eqref{eq:con_n} and holds for every integer $k$ and $j$. 

Continuing with the last expression while considering the case when $\ell$ is even, we have
\begin{align}
\omega^{\ell nm\frac{\left(d-1 \right)}{2}} &= \omega^{\frac{\ell nmd}{2}} \omega^{-\frac{\ell n m}{2}},\\
&= \omega^{-\frac{\ell n m}{2}},\\
& = p,
\end{align}
where $\omega^{\frac{\ell nmd}{2}} = 1$, because $\ell /2$ is an integer and the exponent term is an integer multiple of $d$. 

The case when $\ell$ is odd can be further divided into the following two cases
\begin{enumerate}
	\item[i)] When $d$ is odd, $\left(d - 1\right)/2 = v$ is an integer, and using \eqref{eq:con_n}, we write
	\begin{align}
	\omega^{\ell nm\frac{\left(d-1 \right)}{2}} &= \omega^{bd m v},\\
	& = 1.
	\end{align}
	\item[ii)] When $d$ is even, the right hand sides of both \eqref{eq:con_m} and \eqref{eq:con_n} are even. Since $\ell$ is odd, therefore both $m$ and $n$ must be even for the equalities to hold. Hence, $m/2 = w$ is an integer, and we write
	\begin{align}
	\omega^{\ell nm\frac{\left(d-1 \right)}{2}} &= \omega^{bd w\left(d-1 \right)},\\
	& = 1.
	\end{align}
\end{enumerate}
Hence, we have shown that $\lambda^\ell = p$ for every integer $j$ and $k$, i.e., every value generated by \eqref{eq:eig} is indeed  an eigenvalue of $\M{W}_{nm}$. What is left to prove now is that it generates all the distinct eigenvalues of $\M{W}_{nm}$. This can be proven simply by showing that \eqref{eq:eig1} generates exactly $\ell$ distinct values.

Let $N$ be the total number of distinct values generated by \eqref{eq:eig}. Then, it is trivially true that $N\leq \ell$ since there are no more than $\ell$ distinct numbers that satisfy \eqref{eq:l_equation}. We need to show that \eqref{eq:eig1} generates at least $\ell$ distinct values. Since $nm\left( d - 1\right)/2$ in the exponent of \eqref{eq:eig1} is a constant for any given $n,m,$ and $d$, we only need to show that $\modS{mk}{nj}{d}$, for $0\leq j,k\leq d - 1$ generates at least $\ell$ distinct values. Let
\begin{align}
\modS{mj}{nk}{d} = h c ,
\label{eq:numbers}
\end{align}
where $c = d/\ell$, then showing the existence of $0\leq j,k \leq d - 1$ for every $0\leq h \leq \ell -  1$ shows that \eqref{eq:eig1} generates at least $\ell$ distinct values (because all $hc$ are distinct in the defined range). We substitute the values of $m$ and $n$ from \eqref{eq:con_m} and \eqref{eq:con_n} to obtain the equivalent condition
\begin{align}
\frac{ad}{\ell}j - \frac{bd}{\ell}k &= h \frac{d}{\ell} \\
\modS{aj}{bk}{\ell} &= h.
\label{eq:ab_condition}
\end{align}
Since $a,b$, and $\ell$ are relatively prime, the existence of $0\leq j,k \leq d - 1,$ in \eqref{eq:ab_condition} for every $0 \leq h \leq \ell -1$ is guaranteed by the B\'ezout's identity \cite{Tig:01:WS}. Therefore, \eqref{eq:eig1} generates at least $\ell$ distinct numbers. Since there are no more than $\ell$ distinct numbers that satisfy \eqref{eq:l_equation}, we deduce that \eqref{eq:eig1} generates exactly $\ell$ distinct values.

Hence we have shown that \eqref{eq:eig_1} is a generator of eigenvalues of $\M{W}_{nm}$, for $n,m = 0,1, \cdots , d-1$, and for any $d\geq 2$. This expression is particularly useful since it only depends on $n,m$, and $d$, and does not require any tedious calculations. Furthermore, it was previously known\cite{BHN:07:JPA} that the eigenvalues of Weyl operators are given by the $\ell$th roots of some complex number $p, \left| p \right| = 1$. However, specific properties of $\ell$ and $p$ were not known. On our way to proving the validity of \eqref{eq:eig_1}, we have derived certain properties of $\ell$, and an explicit expression for $p$. These new insights can be helpful in simply deriving some properties of Weyl operators. For example, it is straightforward to show that any Weyl operator (except for $\M{W}_{00}$) operating on a prime dimensional Hilbert space does not have any repeated eigenvalue (i.e., $\ell = d$) by using conditions \eqref{eq:con_m} and \eqref{eq:con_n} etc.



\section*{Acknowledgements}

This work was supported by the National Research Foundation of Korea (NRF) grant funded by the Korea government (MSIP) (No.~2016R1A2B2014462) and ICT R\&D program of MSIP/IITP [R0190-15-2030, Reliable crypto-system standards and core technology development for secure quantum key distribution network].

\section*{Author contributions statement}

J.R contributed the idea. J.R, J.K, and Y.J developed the theory. H.S improved the manuscript and supervised the research.  All the authors contributed in analyzing and discussing the results and improving the manuscript.

\textbf{Competing Interests}:  The authors declare that they have no competing interests.
\end{document}